\newtheorem{Reduction Rule}{Reduction Rule}
\newcommand{\stpath}{{$s$-$t$ path}\xspace}
\newcommand{\stgraph}{{$s$-$t$ graph}\xspace}
\newcommand{\stpaths}{{$s$-$t$ paths}\xspace}
\newcommand{\sstpaths}{{shortest $s$-$t$ paths}\xspace}
\newcommand{\vc}{{\sc Vertex Cover}\xspace}
\newcommand{\tp}{{\sc Tracking Paths}\xspace}
\newcommand{\tsp}{{\sc Tracking Shortest Paths}\xspace}
\newcommand{\fvs}{{\sc Feedback Vertex set}\xspace}
\newcommand{\NP}{\text{\normalfont  NP}\xspace}
\newcommand{\FPT}{\text{\normalfont FPT}\xspace}
\newcommand{\APX}{{\sc APX}\xspace}
\newcommand{\Oh}{\mathcal{O}}
 \newcommand{\defparproblem}[4]{
  \vspace{1mm}
\noindent\fbox{
  \begin{minipage}{0.96\textwidth}
  \begin{tabular*}{\textwidth}{@{\extracolsep{\fill}}lr} #1  & {\bf{Parameter:}} #3
\\ \end{tabular*}
  {\bf{Input:}} #2  \\
  {\bf{Question:}} #4
  \end{minipage}
  }
  \vspace{1mm}
}
\begin{document}

\title{Improved Kernels for Tracking Path Problem} 

\author{Pratibha Choudhary \inst{1} \and Venkatesh Raman \inst{2}}

\institute{Indian Institute of Technology Jodhpur, Jodhpur, India.\\
\email{pratibhac247@gmail.com}
\and
Institute of Mathematical Sciences, HBNI, Chennai, India.\\
\email{vraman@imsc.res.in}
}

\authorrunning{P. Choudhary and V. Raman}
\maketitle 

\begin{abstract}
Tracking of moving objects is crucial to security systems and networks. 
Given a graph $G$, terminal vertices $s$ and $t$, and an integer $k$, the \tp problem asks whether there exists at most $k$ vertices, which if marked as trackers, would ensure that the sequence of trackers encountered in each \stpath is unique. It is known that the problem is \textsc{NP}-hard and admits a kernel (reducible to an equivalent instance) with $\Oh(k^6)$ vertices and $\Oh(k^7)$ edges, when parameterized by the size of the output (tracking set) $k$~\cite{tr-j}. An interesting question that remains open is whether the existing kernel can be improved. In this paper we answer this affirmatively:
\begin{enumerate}[(i)]
\item For general graphs, we show the existence of a kernel of size $\Oh(k^2)$
\item For planar graphs, we improve this further by giving a kernel of size $\Oh(k)$ 
\end{enumerate}
In addition, we also show that finding a tracking set of size at most $n-k$ for a graph on $n$ vertices is hard for the parameterized complexity class {\sc W[1]}, when parameterized by $k$.
\keywords{Graphs \and Paths \and Kernelization \and Fixed-parameter tractability \and Planar graphs \and Tracking Paths \and Below guarantee.}
\end{abstract}

\section{Introduction}
\label{sec:intro}
Graphs serve as a systematic model for modeling and analysis of many real life problems. One of the commonly studied problems in areas of networks and machine learning is tracking of moving objects. Typically a secure environment or setup that needs to be monitored, has one or more source and destination points. The requirement is usually to identify the path traced by entities in a network. This can be implemented by placing trackers at some of the checkpoints.

Coordinated path tracking and framework for multi-target tracking have been discussed in~\cite{coordinated-tracking} and \cite{tracking-framework}. Tracking of moving objects has been widely studied in networks, wireless sensor networks, neural networks and binary sensor networks~\cite{tracking-moving-obj},\cite{tracking-moving-obj2},\cite{tracking-neural},\cite{tracking-binary}. Tracking algorithms can also be used in designing debugging tools in programs and for leakage detection systems. Resource efficient solutions are of key important in such scenarios.

The problem of target tracking can be modeled as the following graph theoretic problem.
Let $G=(V,E)$ be an undirected graph without any self loops or parallel edges with a unique entry vertex (source) $s$ and a unique exit vertex (destination) $t$. A simple path from $s$ to $t$ is called an \stpath.
The \tp problem asks to find a set of vertices $T\subseteq V$ such that for any two distinct \stpaths, say $P_1$ and $P_2$, the sequence of vertices in $T\cap V(P_1)$ as encountered in $P_1$ is different from the sequence of vertices in $T\cap V(P_2)$ as encountered in $P_2$. Here $T$ is called a \textit{tracking set} for the graph $G$, and the vertices in $T$ are called \textit{trackers}.
Banik et al.~\cite{ciac17} first studied the problem of tracking paths in graphs, where they focused on distinguishing all \sstpaths in a graph and proved that the problem (\tsp) \NP-hard and \APX-hard. They also gave a $2$-approximate algorithm for the same in planar graphs, along with giving some other results. \tsp was first studied from a parameterized perspective in~\cite{caldam18},~\cite{tr1-j}, where the problem was shown to be fixed-parameter tractable (\textsc{FPT}). \tp is formally defined as follows.

\defparproblem{\tp $(G,s,t,k)$}{An undirected  graph $G=(V,E)$ with two distinguished vertices $s$ and $t$, and a non-negative integer $k$.}
{$k$}
{Does there exist a tracking set  $T$ of size at most $k$ for $G$?}
\medskip

Banik et al.~\cite{tr-j} proved \tp to be \textsc{NP}-complete and fixed-parameter tractable by showing the existence of a polynomial kernel. Specifically it was proven that an instance of \tp can be reduced to an equivalent instance of size $\Oh(k^7)$ in polynomial time, where $k$ is the desired size of the tracking set\footnote{Throughout the paper we assume $k$ to be a non-negative integer.}. Here it remains an open question whether the kernel size can be improved and if there possibly exists a lower bound for the kernel. We answer the first question in this paper by giving an improved kernel for general graphs and planar graphs. We also give the first hardness result for \tp with respect to parameterized complexity.

\noindent 
{\bf Our Contributions and Methods.} We give a quadratic kernel for \tp on general graphs, which is a major improvement from the $\Oh(k^7)$ kernel given in~\cite{tr-j}. We also give a linear kernel for \tp on planar graphs. Further we prove that deciding if there exists a tracking set of size at most $n-k$, where $n$ is the number of vertices in the graph, is {\sc W[1]}-hard.

Given an instance $(G,s,t,k)$, we give a polynomial time algorithm that either determines that $(G,s,t,k)$ is a NO instance or produces an equivalent instance  with $\Oh(k^2)$ vertices and $\Oh(k^2)$ edges, where $k$ is the size of a desired tracking set. This polynomial time algorithm is called a {\em kernelization algorithm} and the reduced instance is called a {\em kernel}. For more details about parameterized complexity and kernelization we refer to monographs~\cite{DF99,Cygan:2015:PA:2815661}.

The kernelization algorithm works along the following lines. Let $(G,s,t,k)$ be an input instance to \tp. Two main results used to build the kernel in~\cite{tr-j} were $(i)$ every tracking set is also a \textit{feedback vertex set} (set of vertices whose deletion removes all cycles from graph), $(ii)$ if there exist more than $k+1$ paths between a pair of vertices in $G$, then $G$ cannot be tracked with at most $k$ trackers. However, in this paper, though we start the algorithm with a $2$-approximate solution for \textit{feedback vertex set} (FVS), we use another newly introduced result. Specifically, we prove that if there exists an induced subgraph in $G$ which consists of a tree with all of its leaves adjacent to a particular vertex $v$, then the size of a minimum tracking set for $G$ is at least one less than the number of neighbors of $v$ in this tree. 
Then if $S$ is an FVS of size at most $2k$, we give bounds on different types of vertices in $G\setminus S$, based on how they share neighbors in $S$. Combining all these bounds we prove the existence of a quadratic kernel for general graphs. We also give a linear kernel for planar graphs. A planar graph is a graph that can be embedded on a two dimensional plane i.e. it can be drawn on a two dimensional plane in such a way that its edges intersect only at their end points and do not cross each other. Planar graphs are also those graphs that do not have $K_{3,3}$ and $K_5$ as a minor. Eppstein et al. studied \tp for planar graphs in~\cite{ep-planar}, where they show that \tp remains \textsc{NP}-complete when the graph is planar, and give a $4$-approximation algorithm for this setting. Our linear kernel uses the bound on the number of faces with respect to the size of an optimal tracking set for a planar graph, given in~\cite{ep-planar}.

%
In Parameterized complexity it is common to identify tractable parameterizations. For a graph on $n$ vertices, $n$ trackers is a trivial upper bound, so from the perspective of `distance to triviality' we consider the question of whether we can find a tracking set with $n-k$ trackers. Many graph theory problems have contrasting results in similar cases. For example, in \textsc{Graph Coloring} (coloring the vertices such that end points of each edge are colored differently), finding if we can color the graph vertices with at most $k$ colors is \textit{para} \textsc{NP}-hard (\textsc{NP}-hard even if $k$ is a constant), but finding if we can color the graph vertices with at most $n-k$ colors has a \textsc{FPT} and admits a $O(k^2)$ kernel~\cite{Cygan:2015:PA:2815661}. Similarly, \textsc{Vertex Cover} admits a linear kernel if the question is whether there exists a vertex cover (set of vertices which include end points of all edges in the graph) of size at most $k$, while the problem of finding whether there exists a vertex cover of size $n-k$ is {\sc W[1]}-hard~\cite{Cygan:2015:PA:2815661}.
We prove that finding a tracking set of size at most $n-k$ is {\sc W[1]}-hard on a graph with $n$ vertices, where the parameter is $k$. 

Although a tracking set is also a feedback vertex set, both are fundamentally very different. A graph may have a small FVS but the tracking set may be arbitrarily larger than the FVS. Moreover, \tp is more demanding as a problem compared to the classic covering problems studied in graph theory. While covering problems aim at hitting a particular type of structure in graphs, \tp requires distinguishing each \stpath uniquely using a small set of vertices. Here we mention an important property about this problem. It can be shown using an example that the problem is not closed on minors. See Figure~\ref{fig:minor}.

\begin{figure}[ht]
    \centering
    \includegraphics[width=5.0in]{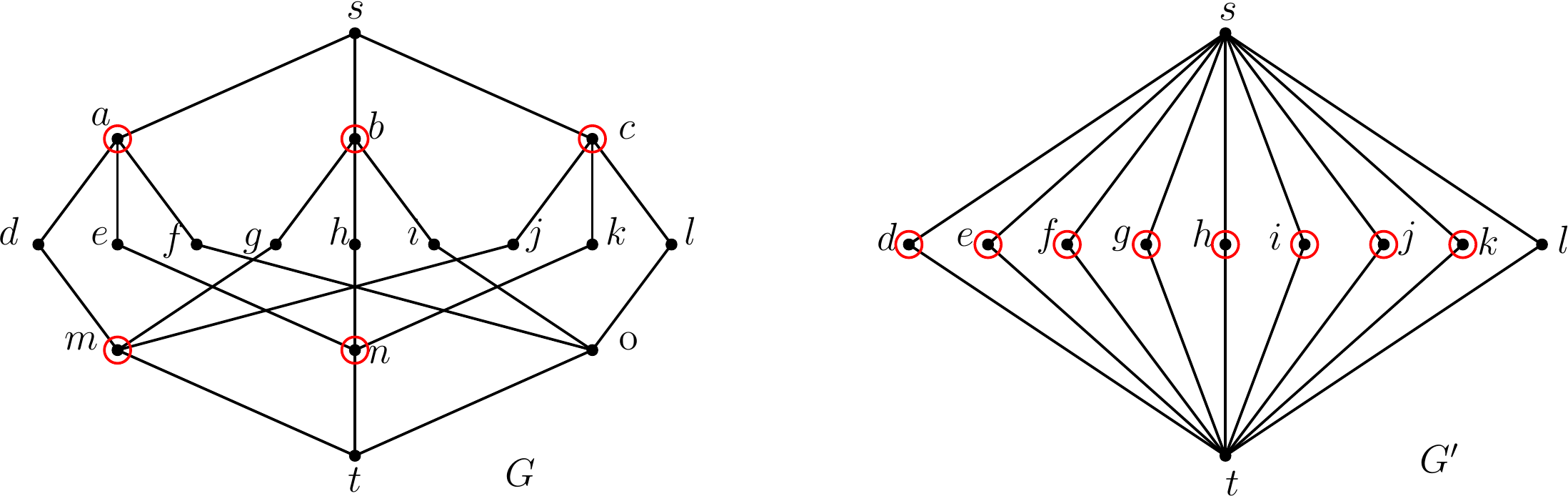}
    \caption{Graph $G'$ is a minor of graph $G$. The set of circled vertices represents a minimum tracking set for both the graphs.}
    \label{fig:minor}
\end{figure}

Observe that graph $G'$ can be obtained from $G$ by contracting the following edges: $sa$, $sb$, $sc$, $mt$, $nt$, $ot$. However note that $G'$ requires at least $8$ trackers while $G$ can be tracked with just $5$ trackers. Since \tp is not closed under minors, the well known \textit{graph minor theorem} does not apply to it~\cite{seymour1,seymour2,seymour-chapter}. Hence, the problem is inherently different from the standard covering problems and the well known techniques of finding some specific obstructions and devising algorithms to hit/cover them, shall not work for the case of \tp.

Even proving that \tp is in \textsc{NP} is non-trivial. See~\cite{tr-j} for details. A combinatorial generalization of \tp is studied in~\cite{caldam18},~\cite{tr1-j}, where the input is a set system, and it is required to find a subset of elements from the universe that have a unique intersection with each set in the family. The problem has been shown to be a dual of the \textsc{test cover} problem. \tp was proven to be polynomial time solvable for chordal graphs, tournament graphs and for the case when edges are used as trackers instead of vertices~\cite{iwoca},~\cite{polytime-arxiv}.
A related problem, \textit{Identifying Path Cover} has been discussed in~\cite{pathcover} and~\cite{identifying-path}. \textit{Identifying Path Cover} requires finding a set of paths that cover all the vertices in a graph and  uniquely identifies each vertex by inclusion in a distinct set of paths. 

\section{Preliminaries}
\label{sec:prelim}

A kernelization algorithm is typically obtained using what are called {\it reduction rules}. These rules transform a given parameterized instance in polynomial time to an equivalent instance, and a rule is said to be {\em safe} if the resulting graph has a tracking set of size at most $k$ if and only if the original instance has one.

Throughout the paper, we assume graphs to have no self loops or multi-edges.
When considering tracking set for a graph $G=(V,E)$, we assume that the given graph is an \stgraph, i.e. the graph contains a unique source $s\in V$ and a unique destination $t\in V$ (both $s$ and $t$ are known), and we aim to find a tracking set that can distinguish between all simple \stpaths. If $a,b\in V$, then unless otherwise stated, $\{a,b\}$ represents the set of vertices $a$ and $b$, and $(a,b)$ represents an edge between $a$ and $b$. For a vertex $v\in V$, \textit{neighborhood} of $v$ is denoted by $N(v)=\{x \mid (x,v)\in E\}$. We use $deg(v)=|N(v)|$ to denote degree of vertex $v$. For a graph $G$, we use $G'\subseteq G$ to denote that $G'$ is a subgraph of $G$.
For a vertex $v\in V$ and a subgraph $G'$, $N_{G'}(v)=N(v)\cap V(G')$. For a subset of vertices $V'\subseteq V$ we use $N(V')$ to denote $\bigcup_{v\in V'} N(v)$. With slight abuse of notation we use $N(G')$ to denote $N(V(G'))$. For a graph $G$ and a set of vertices $S\subseteq V(G)$, $G-S$ denotes the subgraph induced by the vertex set $V(G)\setminus V(S)$. If $S$ is a singleton, we may use $G-x$ to denote $G-S$, where $S=\{x\}$. We use $[m]$ to denote the set of integers $\{1,\dots,m\}$.

For a path $P$, $V(P)$ denotes the vertex set of path $P$, and for a subgraph(or graph) $G'$, $V(G')$ denotes the vertex set of $G'$. For a subgraph(or graph) $G'$, $E(G')$ denotes the edge set of $G'$. Let $P_1$ be a path between vertices $a$ and $b$, and $P_2$ be a path between vertices $b$ and $c$, such that $V(P_1)\cap V(P_2)=\{b\}$. By $P_1 \cdot P_2$, we denote the path between $a$ and $c$, formed by concatenating paths $P_1$ and $P_2$ at $b$. Two paths $P_1$ and $P_2$ are said to be \textit{vertex disjoint} if their vertex sets do not intersect except possibly at the end points, i.e. $V(P_1)\cap V(P_2) \subseteq \{a,b\}$, where $a$ and $b$ are the starting and end points of the paths. By distance we mean length of the shortest path, i.e. the number of edges in that path. 
For a graph $G=(V,E)$, an FVS is a set of vertices $S\subseteq V$ such that $G\setminus S$ is a forest.

\section{Analyzing structures}
\label{sec:structures}

In this section we analyze the problem of distinguishing paths in some specific graph structures along with providing some basic preprocessing steps. We start with some reduction rules and basic results from~\cite{tr-j} followed by additional ones. Later we give some important lemmas based on tree like structures, which forms the base for vertex counting arguments for kernels in subsequent sections.
Following reduction rules are applied exhaustively as long as they are applicable.

\begin{Reduction Rule}~\cite{tr-j}
\label{red:stpath}
If there exists a vertex or an edge that does not participate in any \stpath then delete it.
\end{Reduction Rule}

In the rest of the paper we assume that each vertex and edge participates in at least one \stpath. 

\begin{Reduction Rule}
\label{red:no-deg-one}
If $V\setminus\{s,t\}=\emptyset$, then return a trivial YES instance. Else, if degree of $s$ (or $t$) is $1$ and $N(s)\neq t$ ($N(t)\neq s$), then delete $s$($t$), and label the vertex adjacent to it  as $s$($t$). 
\end{Reduction Rule}

\begin{lemma}
\label{lemma:red-no-deg-one}
Reduction Rule~\ref{red:no-deg-one} is safe and can be implemented in polynomial time.
\end{lemma}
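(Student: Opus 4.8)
The plan is to establish two things for Reduction Rule~\ref{red:no-deg-one}: correctness (safety) and polynomial-time implementability. The polynomial-time claim is immediate, since checking the degree of $s$ (or $t$) and, if the rule applies, deleting the vertex and relabelling its unique neighbour can be done in $\Oh(|V|+|E|)$ time; I would dispose of this in one sentence and focus on safety.

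For safety, I would argue that the rule does not change the set of \stpaths in any essential way, and hence preserves the existence of a tracking set of size at most $k$. Suppose $deg(s)=1$ with $N(s)=\{s'\}$ and $s'\neq t$. The key observation is that every \stpath in $G$ must begin with the edge $(s,s')$, so there is a natural bijection between \stpaths in $G$ and $s'$-$t$ paths in $G-s$ (equivalently, \stpaths in the new instance where $s'$ is relabelled $s$): a path $P = s, s', \ldots, t$ in $G$ corresponds to $P' = s', \ldots, t$ in $G-s$. Moreover $s$ itself lies on every \stpath, so $s$ is useless as a tracker: for any two distinct \stpaths $P_1, P_2$ in $G$, the tracker $s$ appears as the first element of both sequences and contributes nothing to distinguishing them. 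Formally, for a candidate tracking set $T$, the sequence of trackers along $P_i$ is just $s$ (if $s\in T$) prepended to the sequence of trackers in $T\setminus\{s\}$ along $P_i'$; hence $T$ distinguishes $P_1$ from $P_2$ in $G$ if and only if $T\setminus\{s\}$ distinguishes $P_1'$ from $P_2'$ in the reduced instance.

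From this I would conclude both directions. If $G$ has a tracking set $T$ of size at most $k$, then $T\setminus\{s\}$ is a tracking set of size at most $k$ for the reduced graph (using the equivalence above, noting that any tracker that is removed or not present only shrinks the set). Conversely, if the reduced graph has a tracking set $T'$ of size at most $k$, then $T'$ (viewed as a vertex subset of $G$, none of whose members is $s$) is a tracking set for $G$: two distinct \stpaths of $G$ map to two distinct $s'$-$t$ paths of $G-s$, which $T'$ distinguishes, and prepending the common vertex $s$ does not merge the sequences. Thus the instances are equivalent, establishing safety. I should also note the corner case $N(s)=\{t\}$: the rule explicitly excludes it because then the only \stpath candidate through that edge is the single-edge path $s,t$, and relabelling would collapse $s$ and $t$; excluding it means the rule is only applied when the relabelling genuinely yields a well-formed \stgraph with distinct terminals.

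I do not expect a serious obstacle here; the only point requiring a little care is making the bijection between \stpaths precise and checking that the "prepend a common vertex" step neither creates nor destroys collisions between tracker sequences — essentially a triviality, but it is the crux of why deleting a pendant source is safe. The symmetric argument for $t$ (a pendant destination) is identical with paths read in reverse, so I would simply say "by symmetry".
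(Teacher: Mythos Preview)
Your proposal is correct and follows essentially the same approach as the paper: both arguments rest on the observation that every \stpath begins with the edge from $s$ to its unique neighbour, giving a bijection with paths in the reduced instance, and that a vertex lying on all \stpaths is redundant as a tracker. Your write-up is more explicit about the bijection and the two directions of the equivalence than the paper's brief proof, but the underlying idea is the same.
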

\begin{proof}
Observe that if $V\setminus\{s,t\}=\emptyset$, then $G$ consists of only the edge $(s,t)$. Since there exists only one \stpath, no trackers are needed for distinguishing. Hence, the given instance is a YES instance. Such a case can be identified in constant time.

Else $V\setminus\{s,t\}\neq\emptyset$. Consider a graph $G$ where $N(s)=\{a\}$ ($N(t)=\{a\}$). Note that since all \stpaths pass through $a$, a minimal tracking set need not contain $a$. Also the sequence of each \stpath starts (ends) with $s$ ($t$) followed (preceded) by $a$. Hence we can simply relabel $a$ as the source (destination) vertex $s$ ($t$), delete $s$ ($t$) from the graph $G$, and this would not affect a tracking set for the graph. Note that this reduction rule cannot be applied if $s$ and $t$ are neighbors, since we cant delete either of them from $G$.
\end{proof}

Now we recall two reduction rules from~\cite{ep-planar} that help us bound the number of degree two vertices in the graph.

\begin{Reduction Rule}~\cite{ep-planar}
\label{red:twoinseries}
If there exist $a,b,c\in V(G)$ such that $deg(b)=deg(c)=2$, $b,c\notin\{s,t\}$, and $N(b)=\{a,c\}$, then delete $b$ and introduce an edge between $a$ and $c$.
\end{Reduction Rule}

%

%
\begin{Reduction Rule}~\cite{ep-planar}
\label{red:deg-2-triangle}
If there exist $a,b,c\in V(G)$ such that $N(b)=\{a,c\}$ and $(a,c)\in E(G)$ and $b\notin\{s,t\}$, then mark $b$ as a tracker, delete $b$ from $G$ and set $k=k-1$.
\end{Reduction Rule}

Next we recall a monotonicity lemma and a corollary from~\cite{tr-j}, which says that if a subgraph of $G$ cannot be tracked with $k$ trackers, then $G$ cannot be tracked with $k$ trackers either.

\begin{lemma}~\cite{tr-j}
Let $G=(V,E)$ be a graph and $G'=(V',E')$ be a subgraph of $G$ such that $\{s,t\} \in V'$. If $T$ is a tracking set for $G$ and $T'$ is a minimum tracking set for $G'$, then $|T'| \leq |T|$.
\label{subgraph}
\end{lemma}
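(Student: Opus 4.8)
The plan is to reduce everything to one observation: if $T$ is a tracking set for $G$, then its restriction $T\cap V'$ to the vertex set of $G'$ is itself a tracking set for $G'$. Granting this, the conclusion is immediate: a minimum tracking set $T'$ for $G'$ certainly exists (the whole set $V'$ is a tracking set for $G'$), and by minimality $|T'|\le |T\cap V'|$, while trivially $|T\cap V'|\le |T|$; chaining these gives $|T'|\le |T|$.

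So the one thing to verify is that $T\cap V'$ tracks $G'$. First note that every \stpath of $G'$ is also an \stpath of $G$: a simple path in the subgraph $G'$ uses only vertices and edges of $G'$, hence of $G$, and since $\{s,t\}\subseteq V'$ its endpoints are still $s$ and $t$. Now fix any two distinct \stpaths $P_1,P_2$ of $G'$. Regarding them as \stpaths of $G$ and using that $T$ tracks $G$, the ordered sequence of $T$-trackers encountered along $P_1$ differs from that along $P_2$. The key (and only slightly fussy) point is that every vertex of $P_i$ lies in $V'$, so a vertex of $P_i$ belongs to $T$ if and only if it belongs to $T\cap V'$; hence the sequence of $T$-trackers along $P_i$ is \emph{literally the same sequence} as the sequence of $(T\cap V')$-trackers along $P_i$, for $i=1,2$. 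Therefore the $(T\cap V')$-tracker sequences along $P_1$ and $P_2$ also differ, i.e.\ $T\cap V'$ distinguishes this pair. As $P_1,P_2$ were an arbitrary pair of distinct \stpaths of $G'$, the set $T\cap V'$ is a tracking set for $G'$, as claimed.

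I do not expect a genuine obstacle here; the statement is essentially a bookkeeping fact about restricting a tracking set to a subgraph on the same labeled vertices. The only place where a little care is needed is exactly the step above — arguing that passing between $G$ and $G'$ neither alters an \stpath of $G'$ nor changes how its tracker subsequence is read off (moving to $G$ may add \emph{new} \stpaths that leave $V'$, but those are irrelevant to tracking $G'$). Everything else is trivial inequalities and the existence of a minimum.
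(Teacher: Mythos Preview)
Your proposal is correct. The paper does not actually supply a proof of this lemma; it is quoted from~\cite{banik2019} and used as a black box. Your argument---restrict $T$ to $V'$, observe that every \stpath of $G'$ is an \stpath of $G$ and that along such a path the $T$-sequence and the $(T\cap V')$-sequence coincide, then compare sizes---is exactly the standard monotonicity argument one would expect and matches the intended reasoning behind the cited result.
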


\begin{definition}
Let $G'\subseteq G$ be a subgraph. If $u,v\in V(G')$ is a pair of distinct vertices then, for the subgraph $G'$, $u$ is a \textbf{local source} and $v$ is a \textbf{local destination} if the following hold: \textsf{(a)} there exists a path in $G$ from $s$ to $u$, say $P_{su}$, and another path from $v$ to $t$, say $P_{vt}$, \textsf{(b)} $V(P_{su})\cap V(P_{vt})=\emptyset$, \textsf{(c)} $V(P_{su})\cap V(G')=\{u\}$ and $V(P_{vt})\cap V(G')=\{v\}$.
\end{definition}
Note that a subgraph may have multiple local source-destination pairs. 

\begin{lemma}~\cite{tr-j}[Rephrased]
If each vertex and edge in graph $G$ participate in an \stpath, then for a subgraph $G'\subseteq G$ containing at least one edge, it holds that $G'$ contains a local source and a local destination.
\label{lemma:induced}
\end{lemma} 

In the rest of the paper the phrase `subgraph cannot be tracked by $k$ trackers' implies that the paths between a local source and destination in a subgraph cannot be tracked with $k$ trackers.
Due to Lemma~\ref{subgraph} and Lemma~\ref{lemma:induced}, we have the following two corollaries.

\begin{corollary}~\cite{tr-j}
\label{corollary:subgraph}
If a subgraph of $G$ that contains both $s$ and $t$ cannot be tracked by $k$ trackers, then $G$ cannot be tracked by $k$ trackers either.
\end{corollary}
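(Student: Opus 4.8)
The plan is to obtain the corollary as an immediate consequence of the monotonicity statement in Lemma~\ref{subgraph}, argued by contraposition. So I would fix an arbitrary subgraph $G'$ of $G$ containing both $s$ and $t$, and assume, toward the contrapositive, that $G$ admits a tracking set $T$ with $|T| \le k$; the goal is then to produce a tracking set for $G'$ of size at most $k$.

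The one small point to dispatch first is that a \emph{minimum} tracking set of $G'$ exists, so that the object $T'$ named in Lemma~\ref{subgraph} is well defined. This holds because $V(G')$ itself is always a tracking set of $G'$: if every vertex is a tracker, the tracker sequence recorded along any \stpath is the path itself, so two distinct \stpaths necessarily record distinct sequences. Hence the family of tracking sets of $G'$ is non-empty and contains a minimum-size member $T'$. Now the hypotheses of Lemma~\ref{subgraph} are met --- $G'$ is a subgraph of $G$ containing $s$ and $t$, $T$ is a tracking set of $G$, and $T'$ is a minimum tracking set of $G'$ --- so the lemma yields $|T'| \le |T| \le k$. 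Thus $G'$ is tracked by the at-most-$k$ trackers in $T'$, contradicting the hypothesis of the corollary that $G'$ cannot be tracked by $k$ trackers; by contraposition the corollary follows.

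I do not expect any genuine obstacle here: the whole argument is carried by Lemma~\ref{subgraph}, and the only step that needs a sentence rather than being a pure restatement is the existence of a minimum tracking set for $G'$, which is handled by the trivial-tracking-set remark above. Note also that the statement is for arbitrary (not necessarily induced) subgraphs, which matches the generality of Lemma~\ref{subgraph}, so nothing extra is needed on that front; in particular Lemma~\ref{lemma:induced} plays no role in this particular corollary (it is presumably used only for the second corollary announced alongside it).
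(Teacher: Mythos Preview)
Your proposal is correct and matches the paper's own (implicit) argument: the paper simply records this corollary as a direct consequence of Lemma~\ref{subgraph}, and your contrapositive unpacking is exactly the right way to spell that out. You are also right that Lemma~\ref{lemma:induced} is irrelevant here and is needed only for the companion Corollary~\ref{corollary:subgraph-pair}.
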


\begin{corollary}
\label{corollary:subgraph-pair}
If there exists a subgraph $G'$ of $G$, and there exists a pair of vertices $u,v\in V(G')$, such that $u$ is a local source for $G'$ and $v$ is a local destination for $G'$, and all paths between $u$ and $v$ in $G'$ cannot be tracked by at most $k$ trackers, then $G$ cannot be tracked by at most $k$ trackers.
\end{corollary}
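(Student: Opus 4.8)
The plan is to reduce the statement to the monotonicity fact already recorded as Corollary~\ref{corollary:subgraph}, by "completing" $G'$ to a subgraph of $G$ that contains both terminals and whose $s$-$t$ paths mirror the $u$-$v$ paths of $G'$. Concretely, since $u$ is a local source and $v$ a local destination for $G'$, Lemma~\ref{lemma:induced} hands us a path $P_{su}$ from $s$ to $u$ and a path $P_{vt}$ from $v$ to $t$ in $G$ with $V(P_{su})\cap V(P_{vt})=\emptyset$, $V(P_{su})\cap V(G')=\{u\}$, and $V(P_{vt})\cap V(G')=\{v\}$. I would let $H$ be the subgraph of $G$ formed by $G'$ together with the edges of $P_{su}$ and $P_{vt}$. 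As $s,t\in V(H)$, Corollary~\ref{corollary:subgraph} tells us it is enough to show that $H$ cannot be tracked with at most $k$ trackers.

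The structural heart of the argument is the claim that the $s$-$t$ paths of $H$ are precisely the paths $P_{su}\cdot Q\cdot P_{vt}$ with $Q$ a $u$-$v$ path of $G'$, and that this assignment $Q\mapsto P_{su}\cdot Q\cdot P_{vt}$ is a bijection. One inclusion is immediate from the disjointness conditions, which make each such concatenation a genuine simple path. For the other inclusion I would observe that no internal vertex of $P_{su}$ lies in $V(G')\cup V(P_{vt})$, so every edge of $H$ incident to such a vertex lies on $P_{su}$; hence $P_{su}$ is chordless in $H$ and $s$ has degree one in $H$ restricted to $V(P_{su})$, which forces any simple $s$-$t$ path of $H$ to begin by traversing all of $P_{su}$ up to $u$, and symmetrically to end by traversing all of $P_{vt}$ from $v$; its middle segment is then a $u$-$v$ path confined to $G'$.

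With the bijection in hand, suppose for contradiction $T$ were a tracking set for $H$ with $|T|\le k$, and set $T':=T\cap V(G')$. For any two distinct $u$-$v$ paths $Q_1,Q_2$ of $G'$, the corresponding $s$-$t$ paths of $H$ share the same prefix along $P_{su}$ and the same suffix along $P_{vt}$, so the $T$-trackers lying on those subpaths appear in identical positions in both; since $T$ separates the two $s$-$t$ paths, the separation must already occur within $V(G')$, i.e.\ $T'$ separates $Q_1$ and $Q_2$. Thus $T'$ is a tracking set for the $u$-$v$ paths of $G'$ with $|T'|\le|T|\le k$, contradicting the hypothesis. Hence $H$ cannot be tracked with $k$ trackers, and Corollary~\ref{corollary:subgraph} gives the same conclusion for $G$.

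The step I expect to need the most care is the "only if" direction of the bijection: ruling out that an $s$-$t$ path of $H$ shortcuts part of $P_{su}$ or $P_{vt}$, re-enters one of them, or weaves between $P_{su}$ and $P_{vt}$ — all of which is controlled by the chordlessness observation plus simplicity of the path, but should be spelled out rather than asserted. Everything after that is routine bookkeeping about common prefixes and suffixes of tracker sequences.
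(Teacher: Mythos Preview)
Your proposal is correct and is essentially the argument the paper has in mind: the paper gives no explicit proof here, merely stating that the corollary follows from Lemma~\ref{subgraph} and Lemma~\ref{lemma:induced}, and your construction of $H=G'\cup P_{su}\cup P_{vt}$ together with the bijection between $u$--$v$ paths in $G'$ and $s$--$t$ paths in $H$ is exactly how one unpacks that citation. Your care about the ``only if'' direction of the bijection---arguing via degrees that any $s$--$t$ path in $H$ must traverse $P_{su}$ and $P_{vt}$ in full---is the right level of detail, and relies precisely on taking $H$ with edge set $E(G')\cup E(P_{su})\cup E(P_{vt})$ rather than the induced subgraph on $V(H)$, which you do.
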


 Next corollary forms a starting point for the kernelization algorithms. 

\begin{corollary}~\cite{tr-j}
The size of a minimum tracking set $T$ for $G$ is at least the size of a minimum FVS for $G$.
\label{corollary:kfvs}
\end{corollary}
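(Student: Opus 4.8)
The plan is to establish the slightly stronger fact that \emph{every} tracking set of $G$ is also a feedback vertex set of $G$; the corollary is then immediate, since a minimum FVS of $G$ has size at most that of any FVS, in particular at most that of a minimum tracking set.

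To prove that an arbitrary tracking set $T$ is an FVS, I would argue by contradiction. If $G-T$ is not a forest it contains a cycle, so pick a shortest (hence chordless) cycle $C$ in $G-T$. A short argument shows that, because every vertex of $C$ avoids $T$, any chord of $C$ in $G$ would also be an edge of $G-T$; thus $C$ is in fact an induced subgraph of $G$ and Lemma~\ref{lemma:induced} applies to $G'=C$. This yields a local source $u$ and local destination $v$ of $C$ together with a path $P_{su}$ from $s$ to $u$ and a path $P_{vt}$ from $v$ to $t$ such that $V(P_{su})\cap V(P_{vt})=\emptyset$, $V(P_{su})\cap V(C)=\{u\}$, and $V(P_{vt})\cap V(C)=\{v\}$. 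The disjointness condition forces $u\neq v$ (otherwise $u$ would lie on both paths), so $C$ splits into two internally disjoint $u$–$v$ arcs $Q_1$ and $Q_2$, and since $|C|\ge 3$ at least one of them has an internal vertex, so $Q_1\neq Q_2$. I would then form the two \stpaths $P_i = P_{su}\cdot Q_i\cdot P_{vt}$.

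The last step is to observe that $P_1$ and $P_2$ are distinct simple \stpaths that no tracker can distinguish: using the three intersection conditions above, each $P_i$ is indeed a simple path, and since $V(C)\cap T=\emptyset$ the only trackers appearing on either $P_i$ are those on $P_{su}$ and $P_{vt}$, encountered in exactly the same order independent of $i$. Hence $P_1$ and $P_2$ have identical tracker sequences, contradicting that $T$ is a tracking set; therefore $G-T$ is a forest.

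I expect the main obstacle to be purely the bookkeeping around Lemma~\ref{lemma:induced}: verifying that the chordless cycle $C$ may be taken to be an induced subgraph (so that the lemma is applicable), and then using its intersection conditions to confirm that the concatenations $P_{su}\cdot Q_i\cdot P_{vt}$ are genuine simple paths and are distinct (which is where $|C|\ge 3$ is needed). Once the two indistinguishable \stpaths are produced, the contradiction with the tracking-set property is immediate.
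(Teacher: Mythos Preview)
Your proposal is correct and follows exactly the standard argument: the paper itself does not reprove this corollary (it is imported from \cite{banik2019}), but it explicitly recalls in the introduction that the underlying fact is ``every tracking set is also a feedback vertex set,'' which is precisely the stronger statement you establish via Lemma~\ref{lemma:induced} and the two indistinguishable \stpaths around an induced cycle of $G-T$. No substantive difference from the intended approach.
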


For the rest of the paper, we assume that the input graph has already been preprocessed using Reduction Rules~\ref{red:stpath} to \ref{red:deg-2-triangle} and hence the following hold:

\begin{enumerate}

\item All vertices and edges in the graph participate in some \stpath.

\item Degree of all vertices in the graph is at least two, and each vertex of degree two has both its neighbors with degree three or higher.

\item There exist at least two \stpaths in the graph.

\end{enumerate}

\subsection{Vertex Disjoint Paths}

Here we give a bound on the number of vertex disjoint paths that can exist between a pair of vertices in a graph $G$, given that $G$ can be tracked with at most $k$ trackers. While in~\cite{tr-j} it is proven that there can exist at most $k+3$ vertex disjoint paths between a pair of vertices in $G$, we improve the bound to $k+1$. The new bound allows easy analysis and computation in future lemmas.

\begin{lemma}
\label{lemma:disjoint-simple}
If there exists two vertices $u,v\in V$ such that there exists more than $k+1$ vertex disjoint paths between $u$ and $v$, and the graph induced by these $k+1$ paths along with $u$ and $v$ has $u$ as a local source and $v$ as a local destination, then $G$ cannot be tracked with at most $k$ trackers.
\end{lemma}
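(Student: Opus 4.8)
The plan is to exhibit, from $k+2$ vertex disjoint paths between $u$ and $v$, two distinct \stpaths that carry the same sequence of trackers no matter which $\le k$ vertices are chosen, and then invoke Corollary~\ref{corollary:subgraph-pair}. First I would set up notation: let $Q_1,\dots,Q_{k+2}$ be the $k+2$ internally vertex-disjoint $u$--$v$ paths, and let $H$ be the subgraph they induce together with $u,v$, with $u$ a local source and $v$ a local destination of $H$. Fix any candidate tracking set $T$ with $|T|\le k$. Since the $Q_i$ are internally disjoint, each internal tracker lies on at most one $Q_i$; together with $u$ and $v$ possibly being trackers, the trackers touch the interiors of at most $k$ of the paths. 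Hence at least two of the $k+2$ paths, say $Q_a$ and $Q_b$, have no tracker in their interior.

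Next I would build two \stpaths in $G$ that differ only in which of $Q_a,Q_b$ they use. Because $u$ is a local source for $H$, there is a path $P_{su}$ from $s$ to $u$ with $V(P_{su})\cap V(H)=\{u\}$, and because $v$ is a local destination there is a path $P_{vt}$ from $v$ to $t$ with $V(P_{vt})\cap V(H)=\{v\}$ and $V(P_{su})\cap V(P_{vt})=\emptyset$. Then $P_1 = P_{su}\cdot Q_a\cdot P_{vt}$ and $P_2 = P_{su}\cdot Q_b\cdot P_{vt}$ are both simple \stpaths (the disjointness conditions guarantee simplicity, and $Q_a,Q_b$ share only $u$ and $v$). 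They are distinct because $Q_a\neq Q_b$. I would then observe that $T\cap V(P_1)$ and $T\cap V(P_2)$ consist of: the trackers on $P_{su}$, the trackers on $P_{vt}$, possibly $u$ and/or $v$, and the trackers interior to $Q_a$ (resp. $Q_b$) — but the latter two sets are empty by the choice of $Q_a,Q_b$. So $T\cap V(P_1) = T\cap V(P_2)$ as sets, and moreover the order in which these common trackers are encountered is the same in both paths, since both traverse $P_{su}$ then $u$ then (tracker-free stretch) then $v$ then $P_{vt}$ in the same order. Hence $T$ fails to distinguish $P_1$ from $P_2$, so $T$ is not a tracking set.

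Since $T$ was an arbitrary set of size at most $k$, no tracking set of size $\le k$ exists for the paths between $u$ and $v$ in $H$; applying Corollary~\ref{corollary:subgraph-pair} (with $G' = H$) gives that $G$ cannot be tracked with at most $k$ trackers, as claimed. The only slightly delicate point is the bookkeeping that distributes the $\le k$ trackers so as to leave two whole paths untouched — one must be careful that $u$ and $v$ being trackers does not help, which is exactly why they appear on both $P_1$ and $P_2$ in the same position and thus cannot separate them; a tracker on $u$ or $v$ simply contributes equally to both sequences. I expect this counting-plus-symmetry argument to be the entire content of the proof, with no real obstacle beyond stating it cleanly.
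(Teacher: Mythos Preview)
Your proposal is correct and follows essentially the same approach as the paper's proof: pigeonhole the at most $k$ trackers among the $k+2$ internally disjoint paths to find two whose interiors are tracker-free, then concatenate each with $P_{su}$ and $P_{vt}$ to obtain two distinct \stpaths with identical tracker sequences. The only minor difference is that your argument already works directly for an arbitrary $T\subseteq V(G)$ of size at most $k$, so the final appeal to Corollary~\ref{corollary:subgraph-pair} is in fact unnecessary (you have exhibited two indistinguishable \stpaths in $G$ itself); the paper phrases the conclusion via that corollary, but the substance is identical.
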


\begin{proof}
For a contradiction assume that there exist $k+2$ vertex disjoint paths $\mathcal{P}=\{P_1,\hdots,P_{k+2}\}$ between a pair of vertices $u$ and $v$ in $G$, and $G$ can be tracked with at most $k$ trackers. Let $G'$ be the subgraph induced by $V(\mathcal{P})\cup\{u,v\}$. Since $u$ is a local source for $G'$ and $v$ is a local destination for $G'$, there exists a path $P_{su}$ that starts at $s$ and ends at $u$ and does not contain any vertex from $G'-\{u,v\}$, and there exists a path $P_{vt}$ that starts at $v$ and ends at $t$ and does not contain any vertex from $G'-\{u,v\}$. See Figure~\ref{fdisjoint:1}.
Consider a pair of paths $P_i,P_j\in\mathcal{P}$. Let $P_i$ and $P_j$ do not contain any trackers (except for $u$ and $v$). Now consider the \stpaths $P'_i=P_{su}\cdot P_i \cdot P_{vt}$ and $P'_j=P_{su}\cdot P_j \cdot P_{vt}$. Note that $P'_i$ and $P'_j$ contain the same sequence of trackers. Since these paths differ only in the vertices on paths $P_i$ and $P_j$, at least one vertex (except $u$ and $v$) either on $P_i$ or $P_j$ has to be a tracker. Thus as long as there are two paths in $\mathcal{P}$ without any trackers, there will be two \stpaths with same sequence of trackers. Hence, at least $k+1$ paths in $\mathcal{P}$ need a tracker on them. Due to Corollary~\ref{corollary:subgraph-pair}, we know that if $G'$ cannot be tracked with at most $k$ trackers then $G$ cannot be tracked with at most $k$ trackers. This contradicts the initial assumption, and hence completes the proof.
\end{proof} 

\begin{figure}
\begin{minipage}[h]{0.45\textwidth}
\centering
  {\includegraphics[scale=0.7]{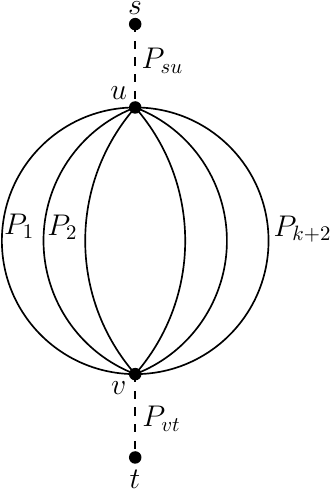}}  
     \caption{Vertex disjoint paths between a local source and destination}
     \label{fdisjoint:1}
\end{minipage}\hspace{10mm}
\begin{minipage}[h]{0.45\textwidth}
\centering
  {\includegraphics[scale=0.7]{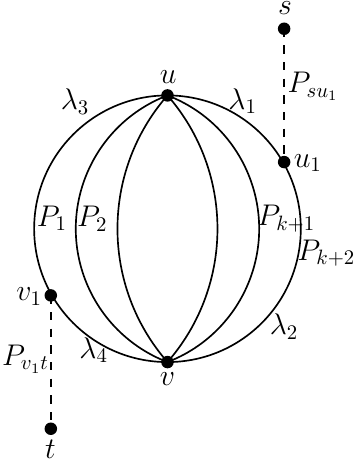}} 
  \vspace{4mm}
    \caption{Vertex disjoint paths between a pair of vertices}
    \label{fdisjoint:2}
\end{minipage}
\end{figure}

\begin{lemma}
If there exists two vertices $u,v \in V$ such that there exists more than $k+1$ vertex disjoint paths between $u$ and $v$, then $G$ cannot be tracked with at most $k$ trackers.
\label{lemma:disjoint}
\end{lemma}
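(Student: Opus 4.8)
The plan is to reduce the general statement of Lemma~\ref{lemma:disjoint} to the already-established Lemma~\ref{lemma:disjoint-simple}. The only gap between the two is that Lemma~\ref{lemma:disjoint-simple} additionally requires that the subgraph $G'$ induced by the $k+2$ vertex disjoint $u$-$v$ paths (together with $u$ and $v$) has $u$ as a local source and $v$ as a local destination, whereas here we are merely told that more than $k+1$ such paths exist. So the task is: given many vertex disjoint paths between some pair $u,v$, produce \emph{some} pair of vertices spanned by $k+2$ vertex disjoint paths for which the induced subgraph does admit a local source / local destination pair, and then invoke the earlier lemma.

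First I would let $\mathcal{P}=\{P_1,\dots,P_{k+2}\}$ be $k+2$ of the given vertex disjoint $u$-$v$ paths and $G'$ the subgraph induced by $V(\mathcal{P})\cup\{u,v\}$ (see Figure~\ref{fdisjoint:2}). By Lemma~\ref{lemma:induced}, since $G'$ contains at least one edge, it has \emph{some} local source $u'$ and local destination $v'$ in $V(G')$ — witnessed by paths $P_{su'}$ and $P_{v't}$ that are internally disjoint from $G'$ and meet $G'$ only at $u'$ and $v'$ respectively. The point is that $u'$ and $v'$ need not be $u$ and $v$; each could lie in the interior of one of the paths $P_i$. I would then argue that one can still find $k+2$ vertex disjoint $u'$-$v'$ paths inside $G'$ (or possibly fewer, but at least $k+2$ after a small adjustment), so that Corollary~\ref{corollary:subgraph-pair} applies to the pair $(u',v')$.

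The case analysis for the previous paragraph is the technical heart. Say $u'$ lies on path $P_a$ and $v'$ lies on path $P_b$ (the degenerate cases $u'\in\{u,v\}$ or $v'\in\{u,v\}$ or $a=b$ are easier and handled separately). Walking from $u'$ along $P_a$ toward $u$ and then along each $P_i$ ($i\neq a,b$) to $v$ and then along $P_b$ from $v$ to $v'$ gives, for each such $i$, a $u'$-$v'$ path; these use disjoint "middle" portions $P_i$ and share only the fixed end-segments of $P_a$ and $P_b$, so they are vertex disjoint paths between $u'$ and $v'$. Doing the same routing through $v$ instead of $u$ on the $P_a$ side, and symmetrically, lets me also use $P_a$ and $P_b$ themselves; in total I recover at least $k+2$ vertex disjoint $u'$-$v'$ paths lying in $G'$, together with $P_{su'}$ and $P_{v't}$ certifying that $u'$ is a local source and $v'$ a local destination of the subgraph they induce. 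I expect this routing argument — making sure nothing is double-counted and that exactly $k+2$ disjoint paths survive regardless of where $u'$ and $v'$ sit — to be the main obstacle; everything else is bookkeeping.

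Once this is in place, Lemma~\ref{lemma:disjoint-simple} applied to the pair $(u',v')$ tells us that the subgraph of $G'$ induced by these $k+2$ paths (with $u',v'$) cannot be tracked with at most $k$ trackers, and then Corollary~\ref{corollary:subgraph-pair} (or, going through $G'$, Corollary~\ref{corollary:subgraph} via Lemma~\ref{subgraph}) lifts this to $G$: $G$ cannot be tracked with at most $k$ trackers. This completes the proof. An alternative, possibly cleaner, route is to skip the explicit local source/destination hunt and instead mimic the proof of Lemma~\ref{lemma:disjoint-simple} directly: invoke Lemma~\ref{lemma:induced} on $G'$ to get $u',v',P_{su'},P_{v't}$, observe that $P_{su'}$ enters $G'$ only at $u'$ and $P_{v't}$ only at $v'$, and then note that any two $u'$-$v'$ paths inside $G'$ that avoid all trackers yield two \stpaths with the same tracker sequence; since $G'$ still contains essentially $k+2$ "independent" $u'$-$v'$ connections, $k+1$ of them must carry a tracker, contradicting trackability by $k$. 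I would present whichever of these is shorter once the routing bookkeeping is nailed down.
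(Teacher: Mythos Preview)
Your reduction to Lemma~\ref{lemma:disjoint-simple} via ``find $k+2$ vertex disjoint $u'$--$v'$ paths inside $G'$'' does not go through. The paths you build --- from $u'$ along $P_a$ to $u$, then across some $P_i$, then along $P_b$ from $v$ to $v'$ --- all share the two fixed end-segments of $P_a$ and $P_b$. If either of those segments has an interior vertex, the paths are \emph{not} vertex disjoint, so Lemma~\ref{lemma:disjoint-simple} does not apply to them. Worse, this is not a fixable bookkeeping issue: when $u'$ is an interior vertex of some $P_a$, its degree in the union $\bigcup_i P_i$ is $2$, so by Menger there are at most two vertex disjoint $u'$--$v'$ paths in that graph. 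You simply cannot manufacture $k+2$ of them, and the induced extra edges in $G'$ give you no control here. Your alternative sketch (``$G'$ still contains essentially $k+2$ independent $u'$--$v'$ connections'') runs into the same obstruction: the $u'$--$v'$ paths are funneled through the two edges incident to $u'$, so a single tracker on the shared segment can cover many of them at once.

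What the paper does instead is the opposite move: rather than changing the endpoints to $u',v'$ and keeping $k+2$ paths, it keeps the endpoints $u,v$ and \emph{shrinks} the path family. Having located $u_1,v_1$ (your $u',v'$) on paths $P_a,P_b$, it observes that $P_{su_1}$ concatenated with the $u_1$--$u$ segment of $P_a$, and $P_{v_1t}$ concatenated with the $v$--$v_1$ segment of $P_b$, certify that $u$ is a local source and $v$ a local destination for the subgraph $G''$ induced by the remaining $k$ paths $\mathcal{P}\setminus\{P_a,P_b\}$. Lemma~\ref{lemma:disjoint-simple} now applies cleanly to $G''$ and forces $k-1$ trackers there; a further case analysis on the segments $\lambda_1,\dots,\lambda_4$ of $P_a,P_b$ (and on the one untracked path left in $\mathcal{P}'$) extracts the extra trackers needed to exceed $k$. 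The cases where $u_1,v_1$ lie on the same path, or coincide with $u$ or $v$, are handled analogously. The missing idea in your proposal is precisely this: use $u_1,v_1$ to \emph{re-certify $u,v$} as a local source/destination pair for a smaller family, instead of trying to make $u_1,v_1$ the new endpoints of a large disjoint family.
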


\begin{proof}
For a contradiction assume that there exist $k+2$ vertex disjoint paths $\mathcal{P}=\{P_1,\hdots,P_{k+2}\}$ between a pair of vertices $u$ and $v$ in $G$, and $G$ can be tracked with at most $k$ trackers. Let $G'$ be the subgraph induced by $V(\mathcal{P})\cup\{u,v\}$. Due to Lemma~\ref{lemma:induced}, there exists a local source, say $u_1$, and a local destination, say $v_1$, in $G'$. We consider various cases possible based on position of $u_1$ and $v_1$ in $G'$.
\begin{itemize}

\item When $u=u_1$ and $v=v_1$, or $u=v_1$ and $v=u_1$. Both of these cases are symmetric to each other, and have been proven in Lemma~\ref{lemma:disjoint-simple}.

\item When $\{u,v\}\cap\{u_1,v_1\}=\emptyset$. First we consider the case when $u_1$ and $v_1$ lie on different paths in $\mathcal{P}$. See Figure~\ref{fdisjoint:2}. We denote the path between $u$ and $u_1$ (subpath of $P_{k+2}$) by $\lambda_1$, the path between $u_1$ and $v$ (subpath of $P_{k+2}$) by $\lambda_2$, the path between $u$ and $v_1$ (subpath of $P_1$) by $\lambda_3$, and the path between $v$ and $v_1$ (subpath of $P_1$) by $\lambda_4$. We denote the paths in $\mathcal{P}\setminus \{P_1,P_{k+2}\}$ by $\mathcal{P}'$. Any \stpath in $G$ that passes through $G'$ will be one among the following types:
\begin{enumerate}

\item $P_{su_1} \cdot \lambda_1 \cdot P_i \cdot \lambda_4 \cdot P_{v_1t}$, where $P_i\in \mathcal{P}'$
\item $P_{su_1} \cdot \lambda_2 \cdot P_i \cdot \lambda_3 \cdot P_{v_1t}$, where $P_i\in \mathcal{P}'$
\item $P_{su_1} \cdot \lambda_1 \cdot \lambda_3 \cdot P_{v_1t}$
\item $P_{su_1} \cdot \lambda_2 \cdot \lambda_4 \cdot P_{v_1t}$

\end{enumerate}
Consider the first two cases. Let $G''$ be the graph induced by $\mathcal{P}'$. Observe that $u$ and $v$ are local source and destination for $G''$, since there exists a path $P_{su_1}\cdot \lambda_1$ from $s$ to $u$, and a path $P_{v_1t}\cdot\lambda_4$ from $v$ to $t$, and these paths intersect with $G''$ only at $u$ and $v$. Since there are $k$ paths between $u$ and $v$ in $G''$, due to Lemma~\ref{lemma:disjoint-simple}, these require at least $k-1$ trackers in $V(\mathcal{P}')\setminus\{u,v\}$. If each of the $k$ paths in $\mathcal{P}'$ has a tracker, the paths indicated in cases 3,4 have the same sequence of trackers, and this contradicts the assumption that $G$ has a tracking set of size $k$.
Else, without loss of generality, let $P_{k+1}$ be the path in $\mathcal{P}'$ that is left without a tracker.

Cases 3,4 denote two vertex disjoint paths between $u_1$ and $v_1$ along $P_1$ and $P_{k+2}$. Hence, due to Lemma~\ref{lemma:disjoint-simple}, there must be a tracker on either $V(\lambda_1)\cup V(\lambda_3)\setminus \{u_1,v_1\}$ or $V(\lambda_2)\cup V(\lambda_4)\setminus \{u_1,v_1\}$. We consider following cases:
\begin{itemize}

\item There exists a tracker in $V(\lambda_1)\setminus \{u_1,v_1\}$: Paths $P_{su_1} \cdot \lambda_1 \cdot \lambda_3 \cdot P_{v_1t}$ and $P_{su_1} \cdot \lambda_1 \cdot P_{k+1} \cdot \lambda_4 \cdot P_{v_1t}$ contain the same set of trackers.

\item There exists a tracker in $V(\lambda_2)\setminus \{u_1,v_1\}$: Paths $P_{su_1} \cdot \lambda_2 \cdot P_{k+1} \cdot \lambda_3 \cdot P_{v_1t}$ and $P_{su_1} \cdot \lambda_2 \cdot \lambda_4 \cdot P_{v_1t}$ contain the same set of trackers.

\item There exists a tracker in $V(\lambda_3)\setminus \{u_1,v_1\}$: Paths $P_{su_1} \cdot \lambda_1 \cdot \lambda_3 \cdot P_{v_1t}$ and $P_{su_1} \cdot \lambda_2 \cdot P_{k+1} \cdot \lambda_3 \cdot P_{v_1t}$ contain the same set of trackers.

\item There exists a tracker in $V(\lambda_4)\setminus \{u_1,v_1\}$: Paths $P_{su_1} \cdot \lambda_1 \cdot P_{k+1} \cdot \lambda_4 \cdot P_{v_1t}$ and $P_{su_1} \cdot \lambda_2 \cdot \lambda_4 \cdot P_{v_1t}$ contain the same set of trackers.

\end{itemize}
 
All the above cases contradict the assumption that $G$ can be tracked with at most $k$ trackers.

Next we consider the case when both $u_1$ and $v_1$ lie on the same path in $\mathcal{P}$. Without loss of generality assume that $u_1$ and $v_1$ both lie on $P_1$. Here note that there exists one path between $u_1$ and $v_1$ that is a strict subpath of $P_1$, and the remaining paths between $u_1$ and $v_1$ pass through $\mathcal{P}\setminus P_1$, via vertices $u$ and $v$. Observe that $u$ and $v$ are a local source and destination for the subgraph $G'''$ induced by $V(\mathcal{P}\setminus P_1)$. Since there are $k+1$ paths between $u$ and $v$ in the subgraph $G'''$, due to Lemma~\ref{lemma:disjoint-simple}, at least $k$ trackers are required in $V(G''')$. If there are $k+1$ trackers in $V(G''')$, it contradicts the assumption that $G$ can be tracked with at most $k$ trackers. If there are $k$ trackers in $V(G''')$, without loss of generality, let $P_2$ be the path without any tracker. Now observe that there are two paths between $u_1$ and $v_1$, the one that does not pass through $u$ and $v$ (subpath of $P_1$) and the one that passes through $u$ and $v$, through $P_2$, that do not have any trackers on them. Due to Lemma~\ref{lemma:disjoint-simple}, at least one tracker is required on one of these paths. Hence we have a contradiction to the assumption that the graph can be tracked with at most $k$ trackers.

\item When $u=u_1$ and $v\neq v_1$, or, $u\neq u_1$ and $v=v_1$. 
Consider $u=u_1$, and $v\neq v_1$. This case can be argued similar to the second case, except that now $\lambda_1=u=u_1$. Similarly, if $u\neq u_1$, and $v=v_1$, the case is similar to the second case, except that now $\lambda_4=v=v_1$.

\end{itemize}

Note that the case when $u_1=v_1$, is not possible after  application of Reduction Rule~\ref{red:stpath}.
Correctness of the proof follows from Corollary~\ref{corollary:subgraph}. Hence the lemma holds.
\end{proof}

Next we give a reduction rule based on Lemma~\ref{lemma:disjoint}.

\begin{Reduction Rule}
\label{red:disjoint-m-paths}
Let $G'$ be a subgraph of $G$, consisting of a pair of vertices $a,b$ adjacent to $i$ vertices, each of degree two. If  $a$ is a local source for $G'$ and $b$ is a local destination for $G'$, then arbitrarily mark $i-1$ of the $i$ vertices of degree two as trackers and delete them. If $i> k+1$ return a NO instance, else set $k=k-i-1$.
\end{Reduction Rule}

\begin{lemma}
\label{lemma:red-disjoint-m-paths}
Reduction Rule~\ref{red:disjoint-m-paths} is safe and can be implemented in polynomial time.
\end{lemma}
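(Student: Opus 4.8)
The plan is to regard the gadget $G'$ as a bundle of $m$ internally vertex-disjoint $a$--$b$ paths of length two: write $w_1,\dots,w_m$ for its degree-two vertices, so that $N(w_i)=\{a,b\}$ for each $i$, and note that $(a,b)\notin E(G)$ because Reduction Rule~\ref{red:deg-2-triangle} has already been applied. The first step is the structural fact that drives the whole argument: \emph{every tracking set $T$ of $G$ contains at least $m-1$ of the vertices $w_1,\dots,w_m$}. To prove it I would fix the disjoint paths $P_{sa}$ and $P_{bt}$ witnessing that $a$ is a local source and $b$ a local destination of $G'$, and observe that if two distinct $w_i,w_j$ were both outside $T$, then $P_{sa}\cdot(a,w_i,b)\cdot P_{bt}$ and $P_{sa}\cdot(a,w_j,b)\cdot P_{bt}$ would be distinct \stpaths carrying the same tracker sequence. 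This yields the NO-instance case at once: if $m>k+1$ then every tracking set has size at least $m-1\ge k+1>k$. I would also record that the $w_i$ are interchangeable, each transposition $(w_i\;w_j)$ being an automorphism of $G$ that fixes $s$ and $t$; hence it is harmless to assume the rule deletes $w_1,\dots,w_{m-1}$ and keeps $w_m$, writing $G''$ for the reduced graph and $k'=k-(m-1)$ for the reduced budget (the number of marked trackers).

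Safety then amounts to showing that $G$ has a tracking set of size at most $k$ iff $G''$ has one of size at most $k'$. For the forward direction I would start from a tracking set $T$ of $G$ with $|T|\le k$; if some $w_j$ with $j\le m-1$ is missing from $T$, the structural fact makes $w_j$ the unique missing $w_i$, so $w_m\in T$, and applying the automorphism $(w_j\;w_m)$ produces an equally large tracking set of $G$ containing all of $w_1,\dots,w_{m-1}$. So assume $\{w_1,\dots,w_{m-1}\}\subseteq T$; then $T\setminus\{w_1,\dots,w_{m-1}\}$ has size at most $k'$ and tracks $G''$, since two distinct \stpaths of $G''$ with the same restricted tracker sequence are also distinct \stpaths of $G$ that avoid $w_1,\dots,w_{m-1}$, hence have the same $T$-sequence in $G$. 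For the reverse direction, take a tracking set $T'$ of $G''$ with $|T'|\le k'$, put $T=T'\cup\{w_1,\dots,w_{m-1}\}$ so that $|T|\le k$, and suppose for contradiction that distinct \stpaths $Q_1,Q_2$ of $G$ share their $T$-sequence. A simple \stpath uses at most one vertex among $w_1,\dots,w_m$ (using two would repeat $a$ or $b$), and every $w_j$ with $j\le m-1$ is a tracker in $T$; so the only way $Q_1$ and $Q_2$ can have equal $T$-sequences is that either neither uses any of $w_1,\dots,w_{m-1}$, or both use the \emph{same} $w_i$ with $i\le m-1$.

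The case where neither $Q_1$ nor $Q_2$ meets $\{w_1,\dots,w_{m-1}\}$ is immediate: the two paths lie in $G''$ and have equal $T'$-sequences there, contradicting that $T'$ tracks $G''$. The case where both traverse the same $w_i$ with $i\le m-1$ is the step I expect to be the main obstacle. Here I would write $Q_\ell=A_\ell\cdot(x_\ell,w_i,y_\ell)\cdot B_\ell$ with $\{x_\ell,y_\ell\}=\{a,b\}$, first checking that $A_\ell$ and $B_\ell$ contain no vertex of the gadget, then rerouting each $Q_\ell$ through $w_m$ to get distinct \stpaths $Q_\ell''=A_\ell\cdot(x_\ell,w_m,y_\ell)\cdot B_\ell$ of $G''$, and finally arguing that passing from the $T$-sequence of $Q_\ell$ to the $T'$-sequence of $Q_\ell''$ is one and the same local edit for both $\ell$ — replace the lone occurrence of $w_i$ by $w_m$, or delete it, according to whether $w_m\in T'$ — so that $Q_1''$ and $Q_2''$ still have equal tracker sequences, contradicting that $T'$ tracks $G''$. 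The delicate point is showing that the rerouting does not disturb the induced sequence, in particular when $Q_1$ and $Q_2$ traverse $w_i$ in opposite directions. For the running time: a candidate $G'$ is found by scanning the $\Oh(n^2)$ pairs $\{a,b\}$, counting their common degree-two neighbours, and certifying $a$ and $b$ as a local source and destination of the induced subgraph by a two-vertex-disjoint-paths (equivalently, max-flow) computation in $G$ with the gadget interior removed — all polynomial — after which marking and deleting $m-1$ vertices and updating $k$ is trivial.
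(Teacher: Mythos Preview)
Your proof is correct and follows essentially the same route as the paper: both directions amount to adding or removing the marked set $\{w_1,\dots,w_{m-1}\}$ from a tracking set, and the NO-instance clause follows from the structural fact that any tracking set must contain at least $m-1$ of the $w_i$. Your write-up is in fact tighter than the paper's --- you make explicit the automorphism step that lets one assume $\{w_1,\dots,w_{m-1}\}\subseteq T$ (the paper silently assumes this when bounding $|T\setminus V_{m-1}|$), and you spell out the rerouting of $Q_1,Q_2$ through $w_m$ in the reverse direction, which the paper handles only informally.
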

\begin{proof}
Let $V_i$ be the set of $i$ vertices of degree two that are adjacent to $a$ and $b$ and let $V_{i-1}$ be the set of $i-1$ vertices that were marked as trackers and deleted. Let $G'$ be the newly created graph after the deletion of $V_{i-1}$. We claim that $G'$ has a tracking set of size $k-i-1$ if and only if $G$ has a tracking set of size $k$. Suppose $G'$ has a tracking set $T'$ of size $k-i-1$. If we add the vertices of $V_{i-1}$ back to $G'$ along with their edges, there exists $i$ vertex disjoint paths between $a$ and $b$. Since $a$ and $b$ are the local source and destination, due to Lemma~\ref{lemma:disjoint} at least $i-1$ trackers are required on the vertices of $V_i$. We mark all the vertices in $V_{i-1}$ as trackers. Now all paths between $a$ and $b$ are tracked. Since all other paths were already being tracked by $T'$ in $G'$, $T'\cup V_{i-1}$ is a tracking set of size $k$ for $G$.

In the other direction let $T$ be a tracking set of size $k$ in $G$. Let $x\in V_i\setminus V_{i-1}$. We claim that $G'$ has a tracking set of size $k-i-1$. Suppose not. Then there exists two \stpaths, say $P_1$ and $P_2$, in $G'$ that have the same sequence of trackers. Observe that if both $P_1$ and $P_2$ do not intersect with edges $(a,x)$ and $(x,b)$, then $T$ is not a tracking set of size $k$ in $G$. This implies that $P_1$ and $P_2$ are also two paths with same sequence of trackers in $G$. Note that the trackers on vertices in $V_{i-1}$ cannot be used to distinguish between $P_1$ and $P_2$, as that would leave some untracked paths between $a$ and $b$. Thus $T$ is not a tracking set for $G$, which is a contradiction. This completes the proof of safeness of Reduction Rule~\ref{red:disjoint-m-paths}.

In order to implement Reduction Rule ~\ref{red:disjoint-m-paths}, we consider each pair of vertices $u,v\in V(G)$, and compare all their neighbors, to check for common neighbors of degree two. This can be done in $\Oh(n^4)$ time. Hence the rule can be applied in polynomial time.
\end{proof}

\subsection{Tree-sink structure}
\label{subsec:tree-sink}

In this section we discuss a specific graph structure, namely the \textit{tree-sink structure}, and prove a lower bound for the number of trackers required if such a structure exists in an \stgraph.

\begin{definition}
A \textbf{tree-sink structure} in a graph $G$ is a subgraph $G'$ such that $V(G')=V(Tr)\cup\{x\}$, where $Tr$ is a tree with at least two vertices, and all of its leaves are adjacent to $x$. Here $Tr$ is the \textbf{tree} while $x$ is the \textbf{sink} of the tree-sink structure.
\end{definition}
Note that $x$ may or may not be adjacent to the non-leaf vertices of $Tr$. See Figure~\ref{fig:tree-sink}.
We prove that if the sink $x$ is adjacent to more than $k+1$ vertices in $Tr$, then $G$ cannot be tracked with at most $k$ trackers. 
We start with a simple case when the graph $G$ itself is a tree-sink structure and either $s$ or $t$ is the sink.


\begin{figure}[ht]
\centering
\includegraphics[scale=0.3]{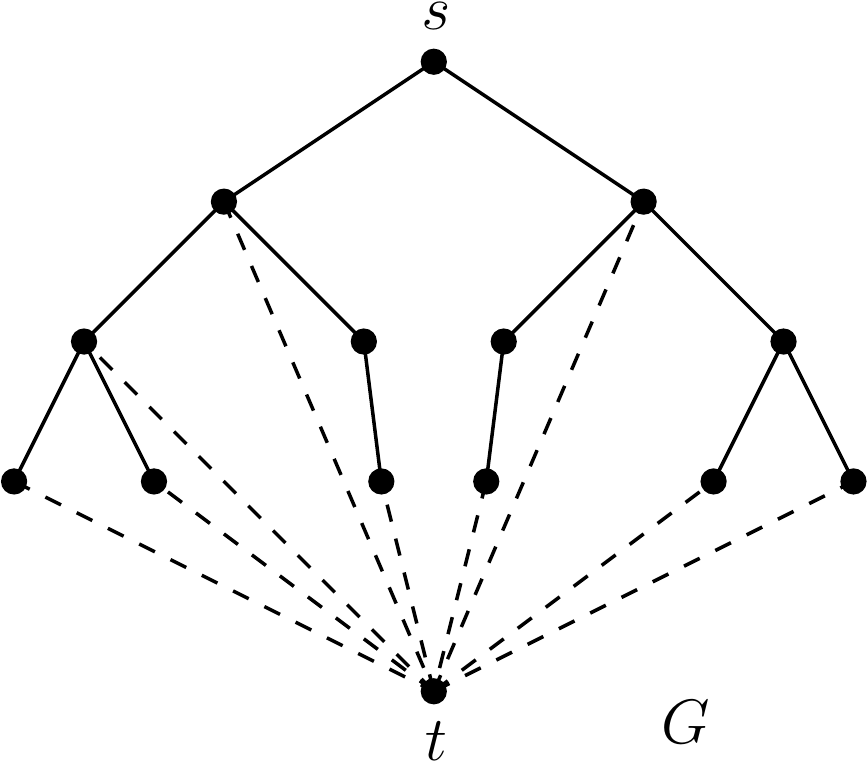} 
\caption{Tree-sink structure: Solid lines are edges of tree, and dashed lines are edges between the sink $t$ and vertices of the tree.} 
\label{fig:tree-sink}
\end{figure}

\begin{lemma}
\label{lemma:rooted-tree}
Let $G$ be an \stgraph that forms a tree-sink structure with $x\in V(G)$ as the sink and $x\in\{s,t\}$. If $|N(x)|=\delta$, then $\delta-1$ trackers are required in $G$, and these trackers have to be in $V(Tr)$, where $Tr$ is the tree induced by $G-x$.
\end{lemma}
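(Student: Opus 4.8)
The plan is to analyze the structure of $s$-$t$ paths in the tree-sink graph $G$ directly. Assume without loss of generality that $x = t$ (the case $x = s$ is symmetric, by reversing the roles of source and destination). Write $Tr = G - t$, which is a tree, and $s \in V(Tr)$ since $s \neq t$. Let $N(t) = \{w_1, \dots, w_\delta\}$ be the neighbors of $t$; by the definition of the tree-sink structure all leaves of $Tr$ lie among the $w_i$, though some $w_i$ may be internal vertices of $Tr$. Each $s$-$t$ path in $G$ consists of a simple path in $Tr$ from $s$ to some $w_i$, followed by the edge $(w_i, t)$. Since $Tr$ is a tree, for each $i$ there is a \emph{unique} path $Q_i$ from $s$ to $w_i$ in $Tr$, and hence exactly $\delta$ distinct $s$-$t$ paths in $G$, namely $P_i = Q_i \cdot (w_i, t)$ for $i \in [\delta]$.

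First I would argue that every tracker must lie in $V(Tr)$, i.e. $t$ is useless as a tracker: since every $s$-$t$ path contains $t$ (as its last vertex) and nothing else distinguishing, $t$ contributes the same symbol at the same position to every path, so removing $t$ from any tracking set leaves it a tracking set; thus we may assume $T \subseteq V(Tr)$. Next I would show $|T| \ge \delta - 1$. The key observation is that the $\delta$ paths $Q_1, \dots, Q_\delta$ in the tree $Tr$ all emanate from the common vertex $s$, and I want to find, for any set $T$ of at most $\delta - 2$ vertices, two indices $i \neq j$ such that $P_i$ and $P_j$ carry the same tracker sequence. Consider the subtree $Tr'$ of $Tr$ that is the union of the paths $Q_1, \dots, Q_\delta$ (this is a subtree containing $s$ and all the $w_i$). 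Root $Tr'$ at $s$. Then the $w_i$ that are leaves of $Tr'$ correspond to the "branch endpoints"; deleting a single vertex $v \neq s$ of $Tr'$ separates the $w_i$'s lying in the subtree below $v$ from the rest, and a tracker at $v$ can only help distinguish paths $P_i$ and $P_j$ if exactly one of $w_i, w_j$ is in the subtree below $v$. I would make this precise by a counting/charging argument: each tracker in $V(Tr') \setminus \{s\}$ can be charged to "splitting off" at most one additional equivalence class among $\{P_1, \dots, P_\delta\}$, so with $t$ paths and $T$ a tracking set we need $|T| \ge \delta - 1$; trackers outside $Tr'$ are on no $P_i$ and cannot help at all.

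The cleanest way to run the counting is probably by induction on $\delta$ (or on $|V(Tr')|$): take any leaf $w_i$ of the rooted tree $Tr'$ and let $v$ be its parent's side; if $Q_i$ shares its entire length with some other $Q_j$ up to the last branch point, then a tracker is needed on the "pendant" segment of $Q_i$ beyond that branch point to separate $P_i$ from $P_j$, and one can then delete that segment and recurse on the smaller tree-sink structure with $\delta - 1$ neighbors. Alternatively, one can invoke \Cref{lemma:disjoint-simple}-style reasoning: after deleting common prefixes, the relevant "private" portions of the $Q_i$'s behave like nearly-disjoint paths from a branch vertex, forcing $\delta - 1$ trackers. I would also need to double-check the boundary case where some $w_i$ is an ancestor of another $w_j$ in $Tr'$ (i.e. $Q_i$ is a prefix of $Q_j$): here $P_i$ and $P_j$ differ only on the segment of $Q_j$ strictly below $w_i$, so that segment needs a tracker, and the charging still goes through. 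The main obstacle I anticipate is getting the counting argument airtight in the presence of these nested/prefix configurations among the $Q_i$ — ensuring that each tracker is charged to exactly one "split" and that $t$ paths genuinely require $\delta - 1$ splits, rather than accidentally double-counting or undercounting when several branch points coincide. The rest — the reduction to $T \subseteq V(Tr)$ and the enumeration of the $\delta$ $s$-$t$ paths — is routine.
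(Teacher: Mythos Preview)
Your approach is correct and takes a genuinely different route from the paper. The paper argues by induction on $\delta$, using the reduction rules as the engine: it picks a deepest leaf $v_1$ of $Tr$ (rooted at $s$), notes that the parent $v_p$ has degree at least $3$ by Reduction Rule~\ref{red:twoinseries}, and then either $v_p$ has a second leaf-child $v_2$ (so Reduction Rule~\ref{red:disjoint-m-paths} marks one of $v_1,v_2$ and deletes it) or $v_p$ is adjacent to $t$ (so $v_1,v_p,t$ form a triangle and Reduction Rule~\ref{red:deg-2-triangle} marks and deletes $v_1$); either way one tracker is spent and $\delta$ drops by one. Your direct enumeration of the $\delta$ paths $P_i=Q_i\cdot(w_i,t)$ and charging argument is more self-contained, needing none of that machinery. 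The obstacle you anticipate with nested/prefix configurations disappears once you observe that, with $Tr$ rooted at $s$, the tracker \emph{sequence} along $Q_i$ is completely determined by the \emph{deepest} tracker on the $s$--$w_i$ path (two $w_i$'s get the same sequence iff they have the same deepest tracker ancestor, or both have none); hence the number of distinct sequences is at most $|T\cap V(Tr)|+1$, forcing $|T\cap V(Tr)|\ge\delta-1$ immediately and uniformly across all cases. The paper's route has the virtue of recycling its reduction rules and reading like an algorithm; yours is shorter and purely combinatorial.
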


\begin{proof}
Without loss of generalization we assume that $x=t$. We root $Tr$ at the source vertex $s$. 
Consider that graph $G$ has already been preprocessed using Reduction Rules~\ref{red:stpath}, \ref{red:no-deg-one} and \ref{red:twoinseries}.

We prove the lemma by induction on the value of $\delta$.
Observe that due to Reduction Rule~\ref{red:stpath}, \ref{red:no-deg-one} and \ref{red:twoinseries}, $\delta=1$ is not possible.
Thus the base case for induction is when $\delta=2$. Note that in this case $G$ is either a triangle or a four cycle. See Figure~\ref{fig:two-neighbors}.
\begin{figure}[ht]
\centering
\includegraphics[scale=0.35]{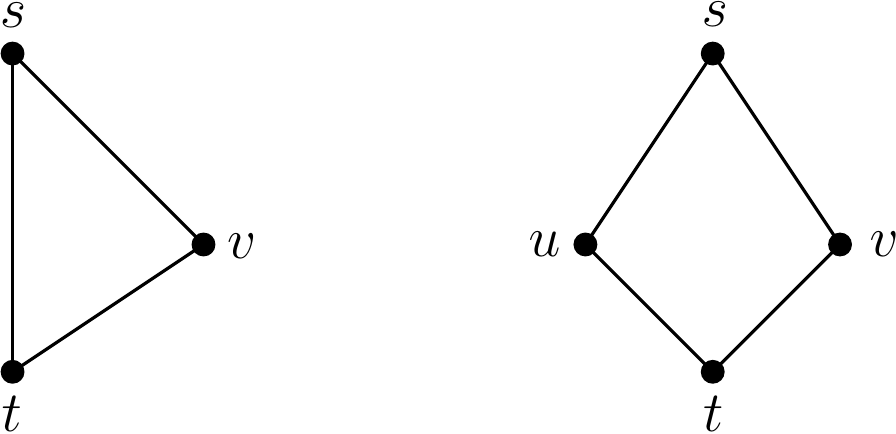} 
\caption{Possible cases when the sink $t$ has two neighbors in the tree induced by $G-t$.} 
\label{fig:two-neighbors}
\end{figure}
Consider the case when $G$ is a triangle. Due to Reduction Rule~\ref{red:deg-2-triangle}, the vertex $v\in V(G)\setminus\{s,t\}$ is marked as a tracker and deleted. Consider the case when $G$ is a four cycle. Observe that there exist two vertices, say $u,v$, of degree two each, adjacent to $s$ and $t$. Due to Reduction Rule~\ref{red:disjoint-m-paths}, one among $u$ and $v$ is marked as a tracker and deleted. Note that in both the cases, after application of the corresponding reduction rules, $G$ comprises only of the edge $(s,t)$. Due to Reduction Rule~\ref{red:no-deg-one}, this is a trivial YES instance. Hence, when $\delta=2$, exactly one tracker is required in $G$. This proves that the claim holds for the base case.

Next, for induction hypothesis, we assume that the claim holds for $\delta=i$, i.e. if the sink is adjacent to $i$ vertices, then $i-1$ trackers are required in $G$. Consider the case when $\delta=i+1$. Note that here $\delta\geq 3$. Due to Reduction Rule~\ref{red:stpath}, all leaves in $Tr$ are adjacent to $t$, $Tr$ being the tree induced by $G-t$. Consider a leaf vertex, say $v_1\in V(Tr)$, that is at maximum distance from $s$. Since $deg(v_1)=2$, due to Reduction Rule~\ref{red:twoinseries}, the degree of its parent node, say $v_p$, is at least $3$. Thus either $v_p$ has another child node, or $v_p$ is adjacent to $t$. We analyze both the possibilities:
\begin{itemize}
\item \textit{Case I}: $v_p$ has another child node, say $v_2$. Since $v_1$ is at maximum distance possible from $s$, $v_2$ is a leaf node in $Tr$. Observe that the graph $G'$ induced by $v_1$, $v_2$, $v_p$ and $t$ has $v_p$ as a local source and $t$ as a local destination, and $deg(v_1)=deg(v_2)=2$. Due to Reduction Rule~\ref{red:disjoint-m-paths}, either $v_1$ or $v_2$ will be marked a tracker and deleted. This reduces the value of $\delta$ from $i+1$ to $i$, while using one tracker.

\item \textit{Case II}: $v_p$ is adjacent to $t$. Observe that $v_1$, $v_p$ and $t$ form and triangle and $deg(v_1)=2$. Due to Reduction Rule~\ref{red:deg-2-triangle}, $v_1$ will be marked as a tracker and deleted. This reduces the value of $\delta$ from $i+1$ to $i$, while using one tracker.

\end{itemize}
Now $\delta=i$. Due to induction hypothesis, we know that when $\delta=i$, then $i-1$ trackers are required in $G$. Since we already used a tracker in both the above cases, the total number of trackers required when $\delta=i+1$, is $i$. Since the sink is $t$ itself, all the trackers need to be in $V(Tr)$. This completes the proof.
\end{proof}

Next we give a corollary which makes the above lemma more usable for the sake of our future arguments.
\begin{corollary}
\label{corollary:rooted-tree-general}
Let $G$ be a graph and $G'$ be a subgraph of $G$ such that $G'$ induces a tree-sink structure with $v\in V(G')$ as its sink. If $|N(v)\cap V(G')|=\delta$, and $v$ is either a local source or a local destination for $G'$, then the size of a tracking set for $G$ is at least $\delta - 1$. Further these $\delta-1$ trackers need to be in $V(G')\setminus\{v\}$.
\end{corollary}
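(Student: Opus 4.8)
The plan is to deduce this from Lemma~\ref{lemma:rooted-tree} as follows: turn $G'$ together with its distinguished local source/local destination into an \stgraph that is itself a tree-sink structure with its sink as a terminal, and then show that any tracking set of $G$, intersected with $V(G')$, is already a tracking set of that smaller \stgraph.

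Assume without loss of generality that $v$ is a local destination for $G'$ (the local-source case is symmetric, swapping $s$ and $t$). By Lemma~\ref{lemma:induced} there is a local source $w \in V(G')$ and vertex-disjoint paths $P_{sw}$ from $s$ to $w$ and $P_{vt}$ from $v$ to $t$ with $V(P_{sw}) \cap V(G') = \{w\}$ and $V(P_{vt}) \cap V(G') = \{v\}$. Let $\hat G'$ be the \stgraph with vertex set $V(G')$, edge set $E(G')$, source $w$, and destination $v$. Since $G'$ is a tree-sink structure with sink $v$, so is $\hat G'$; now its sink $v$ is a terminal, $w$ lies in the tree part $V(G') \setminus \{v\}$, and every vertex and edge of $\hat G'$ lies on some path from $w$ to $v$.

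Fix any tracking set $T$ of $G$ and set $T' := (T \cap V(G')) \setminus \{v\}$; I claim $T'$ is a tracking set of $\hat G'$. For every simple path $Q$ from $w$ to $v$ in $G'$, the concatenation $P_{sw} \cdot Q \cdot P_{vt}$ is a simple \stpath of $G$, by the disjointness above. If $Q_1 \neq Q_2$ are two such paths, then the $T$-tracker sequences of $P_{sw} \cdot Q_1 \cdot P_{vt}$ and $P_{sw} \cdot Q_2 \cdot P_{vt}$ agree on the prefix contributed by $P_{sw}$ (ending with $w$ if $w \in T$), agree on the suffix contributed by $P_{vt}$ (beginning with $v$ if $v \in T$), hence can differ only in the part strictly inside $Q_1$ versus $Q_2$, i.e.\ only if $Q_1$ and $Q_2$ contain different sequences of trackers from $(T \cap V(G')) \setminus \{w,v\} \subseteq T'$. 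Since $T$ distinguishes these two \stpaths, $T'$ distinguishes $Q_1$ from $Q_2$; as $Q_1, Q_2$ were arbitrary, $T'$ is a tracking set of $\hat G'$.

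Finally, $\hat G'$ is a tree-sink structure whose sink $v$ is a terminal with $|N(v) \cap V(\hat G')| = \delta$; Lemma~\ref{lemma:rooted-tree} then gives that every tracking set of $\hat G'$ has at least $\delta - 1$ trackers, all in the tree part $V(G') \setminus \{v\}$. In particular $|T'| \ge \delta - 1$, and since $T' \subseteq T \cap (V(G') \setminus \{v\})$, the tracking set $T$ contains at least $\delta - 1$ trackers in $V(G') \setminus \{v\}$; as $T$ was arbitrary, the corollary follows. The only delicate point is the interface with Lemma~\ref{lemma:rooted-tree}, whose proof assumes the input has been preprocessed by Reduction Rules~\ref{red:stpath}--\ref{red:disjoint-m-paths}: one must check that preprocessing $\hat G'$ cannot lower this bound --- Rule~\ref{red:stpath} is vacuous here, the non-marking Rules~\ref{red:no-deg-one} and~\ref{red:twoinseries} preserve both the tree-sink structure and the sink-degree, and each tracker-marking step (Rules~\ref{red:deg-2-triangle} and~\ref{red:disjoint-m-paths}) deletes exactly as many neighbours of the sink as the trackers it spends, so ``(trackers already forced into the tree part) $+$ (current sink-degree $-1$)'' stays equal to $\delta - 1$ throughout, as in the induction proving Lemma~\ref{lemma:rooted-tree}. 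I expect this bookkeeping to be the main (and essentially the only) obstacle.
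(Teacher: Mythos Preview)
Your proof is correct and follows essentially the same route as the paper's: reduce to Lemma~\ref{lemma:rooted-tree} by viewing $G'$ as an \stgraph with the sink as a terminal, and lift the lower bound back to $G$ via the local source/destination paths. The paper's proof simply cites Corollary~\ref{corollary:subgraph-pair} for the lifting step and then invokes Lemma~\ref{lemma:rooted-tree} directly, whereas you prove the lifting by hand (showing $(T\cap V(G'))\setminus\{v\}$ tracks $\hat G'$) and add the bookkeeping about the reduction rules; this extra care is justified, since the paper's proof of Lemma~\ref{lemma:rooted-tree} does presuppose preprocessing, and your invariant ``trackers forced so far $+$ (current sink-degree $-1$) $=\delta-1$'' is exactly what makes the induction there yield a lower bound rather than just an upper bound.
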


\begin{proof}
Consider the subgraph $G'$. Without loss of generality, we assume that $v$ is a local destination for $G'$. Let $u\in V(G')$ be a local source corresponding to the local destination $v$.
Due to Lemma~\ref{lemma:rooted-tree}, we have that $\delta-1$ trackers are required in $V(G')\setminus\{v\}$ to track all paths between $u$ and $v$. From Corollary~\ref{corollary:subgraph-pair}, if in a subgraph all paths between a local source and destination cannot be tracked with $k$ trackers then the graph cannot be tracked with $k$ trackers. Hence if $k<\delta-1$, then $G$ cannot be tracked with at most $k$ trackers. Thus the size of a tracking set for $G$ is at least $\delta-1$. It follows from Lemma~\ref{lemma:rooted-tree} that these $\delta - 1$ trackers need to be in $V(G')\setminus\{v\}$.
\end{proof}

The next lemma generalizes the result in Corollary~\ref{corollary:rooted-tree-general}. We prove that regardless of where $s$ and $t$ lie in graph $G$, if $G$ forms a tree-sink structure, then the size of the tracking set for $G$ is at least the number of neighbors of the sink in the tree minus one.

\begin{lemma}
\label{lemma:unrooted-tree}
If an \stgraph $G$ forms a tree-sink structure such that $x\in V(G)$ is the sink and $G-x$ induces a tree and $|N(x)|=\delta$, then the size of a tracking set for $G$ is at least $\delta-1$, and at least $\delta-2$ trackers are required in $G-x$.
\end{lemma}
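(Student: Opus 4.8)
The plan is to reduce the statement to the rooted case established in Lemma~\ref{lemma:rooted-tree} and Corollary~\ref{corollary:rooted-tree-general}. Write $Tr = G - x$; by hypothesis $Tr$ is a tree and every leaf of $Tr$ is a neighbour of $x$. If $x \in \{s,t\}$ the claim is exactly Lemma~\ref{lemma:rooted-tree}, which even places all $\delta - 1$ trackers inside $V(Tr) = V(G-x)$, so assume $x \notin \{s,t\}$, hence $s,t \in V(Tr)$; I may also assume $\delta \ge 3$, since after the reduction rules a tree-sink structure with $\delta = 2$ is a short cycle already handled by the earlier reductions.

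The first step is to build a connected subtree $Tr'$ of $Tr$ and a vertex $a \in N(x)\setminus V(Tr')$ such that (i) every leaf of $Tr'$ lies in $N(x)$, so that $G' := G[V(Tr')\cup\{x\}]$ is again a tree-sink structure with sink $x$; (ii) $s \in V(Tr')$ while the unique $Tr$-path from $a$ to $t$ is disjoint from $V(Tr')$; and (iii) $|N(x)\cap V(Tr')| \ge \delta - 1$. One obtains such an $a$ by taking the neighbour of $x$ closest to $t$ in $Tr$ (or $a=t$ if $t\in N(x)$), which guarantees that the $Tr$-path from $a$ to $t$ contains no other neighbour of $x$; one then deletes the vertices of that path from $Tr$, keeps the component still containing $s$, and repeatedly removes leaves not in $N(x)$ — a pruning that, crucially, never discards a vertex of $N(x)$, so (iii) survives. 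Given such $Tr'$, the edge $(x,a)$ followed by the $a$-to-$t$ path meets $V(G')$ only at $x$, and the one-vertex path at $s$ meets $V(G')$ only at $s$; these two paths are disjoint, so $x$ is a local destination of $G'$ with local source $s$. Corollary~\ref{corollary:rooted-tree-general} then yields that every tracking set of $G$ has at least $|N(x)\cap V(Tr')|-1\ge\delta-2$ trackers, all lying in $V(Tr')\setminus\{x\}\subseteq V(G-x)$; a symmetric construction (swapping the roles of $s$ and $t$, or using $x$ as a local source) handles the configurations in which the natural choice of $a$ would force $s$ onto the $a$-to-$t$ path. This proves the ``at least $\delta-2$ trackers in $G-x$'' half of the statement.

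To push the total count to $\delta-1$, I would show that a tracking set $T$ with $|T| = \delta - 2$ — which by the previous paragraph is contained entirely in $V(Tr')$ — cannot exist: picking two neighbours $b,c\in N(x)\cap V(Tr')$ (possible since $|N(x)\cap V(Tr')|\ge 2$) and exploiting the ``external'' portion from $x$ to $t$ that $T$ does not touch, one exhibits two \stpaths of $G$ that visit the same vertices of $V(Tr')$ in the same order but differ only on $\{x\}\cup(V(G)\setminus V(Tr'))$, so they carry the same sequence of trackers; this forces one extra tracker outside $V(Tr')$, which added to the $\delta-2$ forced inside $V(Tr')$ gives $|T|\ge\delta-1$.

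The main obstacle is the construction of $Tr'$ and $a$: deleting $s$, $t$, or an arbitrary branch of $Tr$ can create a leaf of the surviving subtree that is not adjacent to $x$ and thereby breaks the tree-sink property, so $Tr'$ must be chosen to ``stop'' exactly at vertices of $N(x)$, and a case analysis is needed according to whether $s$ (resp.\ $t$) is a leaf or an internal vertex of $Tr$ and whether it lies in $N(x)$, checking in each case that $s$ remains inside $Tr'$ (or that the symmetric variant applies) while $\delta-1$ neighbours of $x$ are retained. The ``$+1$'' step is a smaller difficulty, amounting to pinning down the right indistinguishable pair of \stpaths in each configuration. An alternative route, mirroring the proof of Lemma~\ref{lemma:rooted-tree} by induction on $\delta$ and peeling leaves of $Tr$ using Reduction Rules~\ref{red:twoinseries}, \ref{red:deg-2-triangle} and~\ref{red:disjoint-m-paths}, incurs the same bookkeeping but makes transparent the single step at which a tracker can be absorbed into $x$, which is precisely the source of the gap between $\delta-1$ and $\delta-2$.
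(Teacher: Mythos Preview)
Your route differs from the paper's. The paper removes the single edge between $t$ and its parent $\hat t$ in $Tr$ (rooted at $s$), obtaining \emph{two} subtrees $Tr_1\ni s$ and $Tr_2\ni t$ that together contain all $\delta$ neighbours of $x$; it then applies Corollary~\ref{corollary:rooted-tree-general} to each ($x$ being a local destination for the first and a local source for the second), getting $(\delta_1-1)+(\delta_2-1)=\delta-2$ trackers inside $V(Tr)$, and finishes by exhibiting two \stpaths with the same tracker sequence to force the extra tracker. You instead try to manufacture a single subtree $Tr'$ retaining $\delta-1$ neighbours of $x$ and apply the corollary once.

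The construction of $Tr'$ has a genuine gap: the step ``keeps the component still containing $s$'' --- not the subsequent leaf-pruning --- can already discard neighbours of $x$, so claim~(iii) need not hold. Take $Tr$ with edge set $\{(s,c_1),(c_1,c_2),(c_2,t),(t,\ell_1),(t,\ell_2)\}$ and $N(x)=\{s,c_1,\ell_1,\ell_2\}$, so $\delta=4$; one checks this graph survives Reduction Rules~\ref{red:stpath}--\ref{red:deg-2-triangle}. Your recipe picks $a\in\{\ell_1,\ell_2\}$, say $a=\ell_1$; deleting the vertices $\{\ell_1,t\}$ of the $a$-to-$t$ path leaves $\{s,c_1,c_2\}$ as the $s$-component and $\{\ell_2\}$ as a separate component, and after pruning $c_2$ you obtain $Tr'=\{s,c_1\}$ with $|N(x)\cap V(Tr')|=2<\delta-1$. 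The neighbour $\ell_2$ was lost because it hangs off $t$; more generally, branches attached to \emph{any} vertex of the $a$-to-$t$ path, including its endpoints, can carry arbitrarily many neighbours of $x$ that your one-sided cut throws away. The paper's two-sided split, cutting a single edge and keeping both components, loses no neighbour of $x$, which is exactly why the $(\delta_1-1)+(\delta_2-1)$ count goes through; any repair of your construction that genuinely guarantees~(iii) appears to drive you back to that same two-subtree accounting.
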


\begin{proof}
Let $Tr$ be the tree induced by $V(G\setminus\{x\})$.
The case when $x\in\{s,t\}$ has been proven in Lemma~\ref{lemma:rooted-tree}.
Consider the case when $s,t \in V(Tr)$.
We start by rooting the tree at $s$. Now create a graph $G'$ by removing the edge between $t$ and its parent vertex, say $\hat{t}$, in $Tr$. Observe that in $G'$, there exists a tree, say $Tr_1$ that can be considered rooted at $s$, consisting of all those vertices in $V(Tr)$ that are not descendants of $t$ in $Tr$, with all its leaves adjacent to the vertex $x$. There exists another tree, say $Tr_2$, rooted at $t$, consisting of all of its descendants in $Tr$, with all of its leaves adjacent to $x$. See Figure~\ref{fig:tree-gen}. We denote the graph induced by $V(Tr_1)\cup \{x\}$ by $G_1$, and the graph induced by $V(Tr_2)\cup \{x\}$ by $G_2$. Let $\delta_1$ be the number of leaves in $Tr_1$, and $\delta_2$ be the number of leaves in $Tr_2$. Note that $\delta_1+\delta_2=\delta$.

\begin{figure}[ht]
\centering
\includegraphics[scale=0.34]{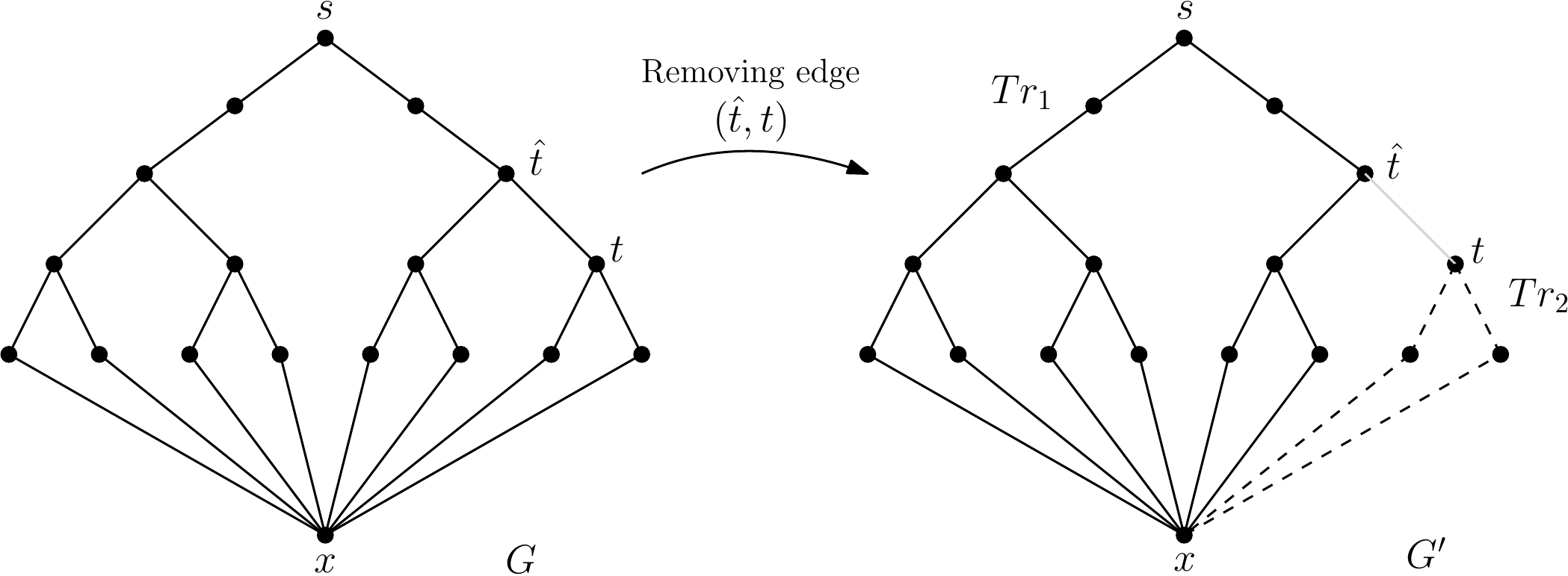} 
\caption{Removing edge $(\hat{t},t)$ from $G$ creates two tree-sink structures in $G'$, with trees $Tr_1$ (shown with solid lines) and $Tr_2$ (shown in dashed lines) and sink $x$.} 
\label{fig:tree-gen}
\end{figure}

Note that $x$ is a local destination for $G_1$. Hence by Corollary~\ref{corollary:rooted-tree-general}, since $Tr_1$ has $\delta_1$ many leaves, the size of a tracking set for $G$ is at least $\delta_1-1$, and all these trackers must be in $V(Tr_1-x)$.

Note that $x$ is a local source for $G_2$. Hence by Corollary~\ref{corollary:rooted-tree-general}, since $Tr_2$ has $\delta_2$ many leaves, the size of a tracking set for $G$ is at least $\delta_2-1$, and all these trackers must be in $V(Tr_2-x)$.

If there exists at least $\delta_1+\delta_2-1$ trackers in $G$, then the lemma holds. Else there exist $\delta_1-1$  trackers in $V(Tr_1-x)$ and $\delta_2-1$ trackers in $V(Tr_2-x)$. Hence, there exists exactly one path in $G_1$, say $P_1$, from $s$ to $x$ that does not contain any trackers, and exactly one path in $G_2$, say $P_2$, from $x$ to $t$ that does not contain any trackers. Consider the path $P'=\{s\}\cdot P_1\cdot \{x\}\cdot P_2\cdot\{t\}$. Note that if $G$ contains a total of $\delta_1+\delta_2-2$ trackers, then $x$ is not a tracker and hence $P'$ does not contain any trackers. Recall the edge $e$ that was initially removed between $t$ and its parent, $\hat{t}$, in $Tr$. Consider the path in $G_1$ from $s$ to $\hat{t}$, say $P_{s\hat{t}}$. We consider the following two scenarios.

\begin{itemize}
\item $P_{s\hat{t}}$ is a subpath of the path $P_1$. Consider the paths $\{s\} \cdot P_1\cdot\{x\}\cdot P_2\cdot\{t\}$, and $\{s\}\cdot P_{s\hat{t}}\cdot\{t\}$. Observe that both these paths have no trackers. Hence one more tracker is needed, either in $V(P_1)$ or $V(P_2)$ in order to distinguish them in $G$. 
\item $P_{s\hat{t}}$ is not a subpath of the path $P_1$. If $P_{s\hat{t}}$ does not have a tracker, both the paths $\{s\} \cdot P_1\cdot\{x\}\cdot P_2\cdot\{t\}$ and $\{s\}\cdot P_{s\hat{t}}\cdot\{t\}$ do not contain any trackers. If $P_{s\hat{t}}$ has a tracker, let $t_r\in V(P_{s\hat{t}})$ be the tracker that is closest to $\hat{t}$. Since $\delta-1$ is the minimum number of trackers required in $Tr_1$, there exists a path from $t_r$ to $x$ (and not passing through $s$) in $G_1$ that does not contain any trackers. Lets denote this path by $P_{t_r x}$. Let $P_{st_r}$ be the path from $s$ to $t_r$ that is a subpath of $P_{s\hat{t}}$. Now observe that paths $\{s\}\cdot P_{st_r}\cdot P_{t_r x} \cdot\{x\} \cdot P_2$ and $\{s\} \cdot P_{s\hat{t}}\cdot \{t\}$ have the same set of trackers. Hence in both the cases discussed one more tracker is required in $G$.
\end{itemize}

Thus the total number of required trackers in $G$ is at least $\delta_1+\delta_2-2+1$, i.e. $\delta-1$. 
Since the sink can be a tracker as well, a tree-sink structure requires at least $\delta-2$ trackers in the vertex set of the tree.
\end{proof}

Lemma~\ref{lemma:unrooted-tree} along with Corollary~\ref{corollary:subgraph} gives us the following corollary.


\begin{corollary}\label{corollary:tree-generalized}
In a graph $G$, if there exists a subgraph $G'$ and a vertex $v\in V(G')$, such that $G'$ forms a tree-sink structure with $v$ as a sink, and $|N(v)\cap V(G')|=\delta$ then the size of a tracking set for $G$ is at least $\delta-1$. Further at least $\delta-2$ trackers are required to be in $V(G')\setminus \{v\}$.
\end{corollary}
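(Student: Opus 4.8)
The plan is to reduce the statement to Lemma~\ref{lemma:unrooted-tree} by treating $G'$ as a standalone \stgraph, exactly in the way Corollary~\ref{corollary:rooted-tree-general} was obtained from Lemma~\ref{lemma:rooted-tree}. Since the tree in the tree-sink structure has at least two vertices, $G'$ contains at least one edge, so Lemma~\ref{lemma:induced} (applied to $V(G')$) supplies a local source $u$ and a local destination $w$ in $V(G')$, together with vertex disjoint paths $P_{su}$ from $s$ to $u$ and $P_{wt}$ from $w$ to $t$ in $G$ that meet $V(G')$ only in $u$ and $w$ respectively. I would then regard $G'$ as an \stgraph whose source is $u$ and whose destination is $w$. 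Relabeling terminals does not touch the graph structure, so $G'-v$ still induces a tree, $v$ is still the sink, and $|N(v)\cap V(G')|=\delta$ is precisely the degree of the sink inside this \stgraph. The vertices $u$ and $w$ may or may not coincide with $v$, but Lemma~\ref{lemma:unrooted-tree} covers both $v\in\{u,w\}$ and $v\notin\{u,w\}$; this is exactly why, unlike Corollary~\ref{corollary:rooted-tree-general}, we no longer need the hypothesis that $v$ is a local source or destination.

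Next I would show that for any tracking set $T$ of $G$, the restriction $T\cap V(G')$ is a tracking set of the relabeled \stgraph $G'$. Take two distinct $u$-$w$ paths $P_1,P_2$ lying in $G'$; by the disjointness and intersection properties above, $P_{su}\cdot P_1\cdot P_{wt}$ and $P_{su}\cdot P_2\cdot P_{wt}$ are two distinct \stpaths of $G$ whose tracker sequences agree along the $P_{su}$ and $P_{wt}$ portions, so since $T$ distinguishes them the difference must already be witnessed inside $V(G')$, i.e.\ $T\cap V(G')$ distinguishes $P_1$ from $P_2$. This is nothing but the reasoning behind Lemma~\ref{lemma:induced} and Corollary~\ref{corollary:subgraph-pair}, which I would simply invoke. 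Applying Lemma~\ref{lemma:unrooted-tree} to the relabeled $G'$ then gives $|T\cap V(G')|\ge \delta-1$ with $|T\cap(V(G')\setminus\{v\})|\ge\delta-2$, and since $|T|\ge|T\cap V(G')|$ both assertions of the corollary follow.

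There is essentially no hard step: all the combinatorial content is already inside Lemma~\ref{lemma:unrooted-tree}. The only points deserving a careful sentence are that recasting $G'$ as a standalone \stgraph genuinely preserves every hypothesis of that lemma (the neighbourhood count there is taken within the ambient graph, which is now $G'$, so it matches $|N(v)\cap V(G')|$), and that the standing assumption that graphs have been preprocessed by Reduction Rules~\ref{red:stpath}--\ref{red:deg-2-triangle} is used in Lemma~\ref{lemma:unrooted-tree} only to drive its own induction; the lower bound it outputs is a statement about minimum tracking sets and is inherited by the supergraph $G$ through Corollary~\ref{corollary:subgraph-pair}. Accordingly I expect the final proof to be a single short paragraph citing Lemma~\ref{lemma:unrooted-tree}, Lemma~\ref{lemma:induced}, and Corollary~\ref{corollary:subgraph-pair}.
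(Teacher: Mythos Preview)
Your proposal is correct and follows essentially the same route as the paper, which derives the corollary in one line from Lemma~\ref{lemma:unrooted-tree} together with the monotonicity/subgraph machinery. Your write-up is in fact more careful than the paper's: the paper cites Corollary~\ref{corollary:subgraph} (which literally requires $s,t\in V(G')$), whereas you correctly go through Lemma~\ref{lemma:induced} to obtain a local source--destination pair and then invoke Corollary~\ref{corollary:subgraph-pair}, which is the statement actually needed when $G'$ need not contain the global terminals.
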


For the rest of the paper, we assume that the input graph has already been preprocessed by the reduction rules stated so far.

\section{Quadratic Kernel for General Graphs}
\label{sec:quadratic}

In this section we show that an instance $(G,k)$ of \tp can be reduced to an equivalent instance $(G',k')$ such that if $(G,k)$ is an YES instance then $|V(G')|=\Oh(k'^2)$, $|E(G')|=\Oh(k'^2)$ and $k'\leq k$. We start by applying Reduction Rules~\ref{red:stpath}, \ref{red:no-deg-one}, \ref{red:twoinseries} and \ref{red:deg-2-triangle}. If the instance is not termed a NO instance by any of the reduction rules, we proceed with following.
Recall from Corollary~\ref{corollary:kfvs}, that the size of a minimum tracking set $T$ for $G$ is at least the size of a minimum FVS for $G$. We start by finding a $2$-approximate \fvs $S$, using~\cite{twofvs}.
From Corollary~\ref{corollary:kfvs}, we have the following reduction rule.

\begin{Reduction Rule}~\cite{tr-j}
Apply the algorithm of  {\rm \cite{twofvs}} to find a $2$-approximate solution, $S$ for \fvs. If $|S| > 2k$, then return that the given instance is a NO instance.
\end{Reduction Rule}

\noindent
Observe that $\mathcal{F}=G\setminus S$ is a forest.
Now we try to bound the number of vertices and edges in $\mathcal{F}$ for the case when all \stpaths in $G$ can be tracked with at most $k$ trackers. 
In general, by `tree' we mean a tree in the forest $\mathcal{F}$.
When referring to a tree-sink structure, by `tree' we mean the tree that forms the tree-sink structure.

We give some counting arguments to bound the vertices in $\mathcal{F}$ and the edges incident on these vertices.
We use the notation $Tr\in\mathcal{F}$ to denote that $Tr$ is a tree in the forest $\mathcal{F}$. We assume that $Tr$ is a maximal tree i.e. there does not exist another tree $Tr'$ in $\mathcal{F}$ such that $V(Tr)\subset V(Tr')$. Now we categorize the vertices in $\mathcal{F}$ as follows:

\begin{itemize}

\item $V_1=\{ v\in\mathcal{F} \mid \exists u\in V(Tr), \textrm{ where } Tr \in \mathcal{F}, v\in V(Tr),  N_S(u)\cap N_S(v)\neq\emptyset \}$

\item $V_2=\{ v\in\mathcal{F} \mid \exists u\in V(Tr), \textrm{ where } Tr \in \mathcal{F},  v\notin V(Tr),  N_S(u)\cap N_S(v)\neq\emptyset \}$

\item $V_3=\{ v\in\mathcal{F} \mid \nexists u\in \mathcal{F} \textrm{ and } N_S(u)\cap N_S(v)=\emptyset \}$

\item $V_4=\{v\in\mathcal{F} \mid N(v)\cap S=\emptyset\}$
\end{itemize}

$E_i$ denotes the set of edges between the set of vertices $V_i$ and $S$, where $i\in[3]$.
Note that some vertices in $\mathcal{F}$ may belong to more than one of the above mentioned sets. While giving the counting arguments, we may allow this possible over counting since it does not change the asymptotic value of the bound on $V(\mathcal{F})$. Note that since each vertex in $S$ can have at most one vertex from $V_3$ adjacent to it, and $|S|\leq 2k$, it follows that $|V_3|\leq 2k$.
The total number of vertices in $\mathcal{F}$ will be less than or equal to $|V_1|+|V_2|+|V_3|+|V_4|$.
Now we explore each of the above categories in detail.

\subsection{Bounding $V_1$}
\label{subsec:V1}

Recall that $V_1=\{ v\in\mathcal{F} \mid \exists u\in V(Tr), \textrm{ where } Tr \in \mathcal{F}, v\in V(Tr),  N_S(u)\cap N_S(v)\neq\emptyset \}$.
Hence $V_1$ is the set of vertices that share a neighbor in $S$ with another vertex in the same tree.
We first give a lemma that bounds the number of trees that can form a tree-sink structure with a common sink. This in turn helps us bound the number of trees in $\mathcal{F}$ whose vertices can form tree-sink structures with the vertices in $S$ as sinks.

\begin{lemma}
\label{lemma:flower}
Let there be a vertex $x$ such that $x$ is a sink for $r\geq 2$ disjoint (except for the sink) tree-sink structures, then the numbers of trackers required is at least $r$.
\end{lemma}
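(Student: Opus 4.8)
The plan is to prove the quantitative version: every tracking set $T$ of $G$ has $|T|\ge t$. Write $Tr_1,\dots,Tr_t$ for the trees of the $t$ tree-sink structures with common sink $x$, and let $G_i$ be the subgraph of $G$ induced by $V(Tr_i)\cup\{x\}$. Each $Tr_i$ is a tree on at least two vertices, so it has at least two leaves, all adjacent to $x$; hence $\delta_i:=|N(x)\cap V(Tr_i)|\ge 2$ for every $i$. In the situation where this lemma is used the $Tr_i$ lie in the forest $\mathcal F=G\setminus S$ and are pairwise vertex-disjoint, which I will use. By Corollary~\ref{corollary:tree-generalized} applied to $G_i$, at least $\delta_i-2$ vertices of $T$ lie in $V(G_i)\setminus\{x\}=V(Tr_i)$; so every $Tr_i$ with $\delta_i\ge 3$ already contains a tracker, and by disjointness these trees contribute at least their number of distinct trackers to $T$. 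The whole difficulty therefore concerns the trees with $\delta_i=2$: for such an $i$ the tree $Tr_i$ is a path and $V(Tr_i)\cup\{x\}$ spans a cycle $C_i$ through $x$, and Corollary~\ref{corollary:tree-generalized} gives nothing; we must charge a private tracker to each such cycle as well.

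The key gadget is: if $C$ is a cycle of $G$ with $V(C)\cap T=\emptyset$, then applying Lemma~\ref{lemma:induced} to the subgraph induced by $V(C)$ yields a local source $u$ and a local destination $v$ together with vertex-disjoint paths $P_{su},P_{vt}$ that meet $V(C)$ only at $u$ and $v$ respectively; concatenating $P_{su}$ and $P_{vt}$ with each of the two arcs of $C$ between $u$ and $v$ gives two distinct \stpaths whose tracker sequences coincide (all their trackers lie on $P_{su}\cup P_{vt}$), contradicting that $T$ is a tracking set. Now suppose for contradiction that $|T|\le t-1$, and first assume $x\notin T$. Let $m$ be the number of trees with $\delta_i\ge 3$; these force at least $m$ distinct trackers into the disjoint sets $V(Tr_i)$, so at most $(t-1)-m$ trackers remain for the $t-m$ trees with $\delta_i=2$, and hence some tree $Tr_j$ with $\delta_j=2$ has $T\cap V(Tr_j)=\emptyset$. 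Since $x\notin T$ as well, the cycle $C_j$ spanned by $V(Tr_j)\cup\{x\}$ satisfies $V(C_j)\cap T=\emptyset$, and the gadget gives the desired contradiction.

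It remains to treat $x\in T$, which I expect to be the main obstacle. Here $T$ has at most $t-2$ trackers besides $x$, so the same count produces \emph{two} trees $Tr_j,Tr_{j'}$ with $\delta=2$ whose interiors are tracker-free, making $x$ the unique tracker lying on each of $C_j$ and $C_{j'}$. For each of these cycles, Lemma~\ref{lemma:induced} returns a local source/destination pair; if $x$ is one of that pair, both arcs between $x$ and the other endpoint avoid every tracker, so the gadget argument again yields two \stpaths with equal tracker sequences. Otherwise each of $C_j,C_{j'}$ admits a local source/destination pair avoiding $x$ --- an ``$x$-free'' route through it --- and one passes to the figure-eight spanned by $V(Tr_j)\cup V(Tr_{j'})\cup\{x\}$, whose only tracker is the hub $x$: routing $s$ to a vertex of $C_j\setminus\{x\}$ and $t$ from a vertex of $C_{j'}\setminus\{x\}$ (using Lemma~\ref{lemma:induced} on this subgraph together with the $x$-free routes), the at-least-two resulting \stpaths all pass through $x$ as their sole common tracker, a contradiction. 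Getting this bookkeeping right for all positions of $s$ and $t$ --- especially when $s$ or $t$ itself lies inside one of the $Tr_i$, which pins the relevant local source/destination --- is the delicate point, and mirrors (for two petals at once) the case analysis in the proof of Lemma~\ref{lemma:unrooted-tree}. With that in hand, the counting shows $|T|\ge t$ for every tracking set $T$, which is the statement of the lemma.
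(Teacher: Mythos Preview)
Your approach is substantially more complicated than the paper's, and you miss the key observation that makes the lemma fall out in one line. The paper does not split on whether $x\in T$; instead it argues that, precisely because there are at least two trees meeting at $x$, the vertex $x$ can be taken to be a local source or a local destination for \emph{each} individual tree-sink structure $G_i$. Concretely: apply Lemma~\ref{lemma:induced} to the union $G'$ of all $t$ tree-sink structures to obtain a local source $u_1$ and local destination $v_1$ for $G'$. For any $G_i$, route $x$ to $t$ through another tree (the one containing $v_1$, say), giving a path meeting $G_i$ only at $x$; this makes $x$ a local destination for $G_i$. Now Corollary~\ref{corollary:rooted-tree-general} (not the weaker Corollary~\ref{corollary:tree-generalized}) applies and forces $\delta_i-1\ge 1$ trackers inside $V(Tr_i)$. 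Disjointness of the trees gives $|T|\ge t$ immediately, with no case distinction on $\delta_i$ and no need to ask whether $x$ itself is a tracker.

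Because you invoke Corollary~\ref{corollary:tree-generalized}, which only yields $\delta_i-2$ trackers, every $\delta_i=2$ tree becomes a separate headache, and you are pushed into the figure-eight analysis. That part is not merely ``delicate bookkeeping'' --- it has a real gap as written. When $x\in T$ and you apply Lemma~\ref{lemma:induced} to the figure-eight $V(Tr_j)\cup V(Tr_{j'})\cup\{x\}$, nothing prevents both the local source and the local destination from landing in the \emph{same} petal $Tr_j$. In that case the two $u$--$v$ arcs of $C_j$ are distinguished by $x$ (one arc contains $x$, the other does not), so your gadget does not fire, and there is no simple $u$--$v$ path that uses $C_{j'}$ at all (any such path would have to pass through $x$ twice). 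Your sentence ``routing $s$ to a vertex of $C_j\setminus\{x\}$ and $t$ from a vertex of $C_{j'}\setminus\{x\}$'' presumes exactly what needs to be shown; the $x$-free routes you obtained for $C_j$ and $C_{j'}$ separately may pass through the other petal, so they do not combine into the required disjoint $s$--entry and exit--$t$ paths for the figure-eight. This hole can be patched, but the right fix is to abandon the case split entirely and use the other trees to certify $x$ as a local endpoint for each $G_i$, as the paper does.
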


\begin{figure}[ht]
\centering
\includegraphics[scale=0.7]{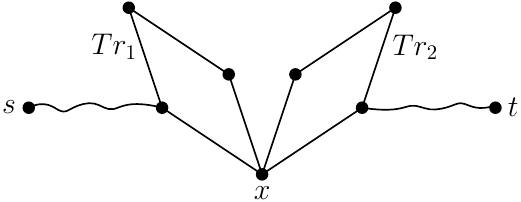} 
\caption{Two tree-sink structures with a common sink.} 
\label{fig:flower}
\end{figure}

\begin{proof}
Suppose $x$ is a sink for $j$ number of trees.
Let $G'$ be the graph induced by $x$ along with all the $j$ trees that form tree-sink structures with $x$ as the sink. Due to Lemma~\ref{lemma:induced}, there exists a local source and a local destination in $G'$. Note that if $x$ were either the local source or the local destination, then due to Corollary~\ref{corollary:rooted-tree-general} each of the trees requires a tracker in their vertex set, and hence the lemma holds. Suppose not. Let $Tr_i$ denote the $i^{th}$ tree and $G_i$ denote the graph induced by the vertex set of $Tr_i$ along with the vertex $x$, for $i\in[j]$. Then due to Lemma~\ref{lemma:induced}, each graph $G_i$ has at least one pair of local source and destination vertices. Consider induced graphs $G_1$ and $G_2$. See Figure~\ref{fig:flower}. Note that for $G_1$ there exists a path from $x$ to $t$ via $Tr_2$, that intersects with $G_1$ only at the sink $x$, thus making $x$ a local destination for $G_1$. Hence due to Corollary~\ref{corollary:rooted-tree-general}, at least one tracker is needed in $V(Tr_1)$. Next consider $G_2$. Note that there exists a path from $s$ to the sink $x$, via $Tr_1$, that intersects at $G_2$ only at $x$, thus making $x$ a local source for $G_2$. Hence by Corollary~\ref{corollary:rooted-tree-general}, at least one tracker is needed in $V(Tr_2)$. Since these arguments can be extended for any induced graph $G_i$, it holds that at least one tracker is required in the vertex set of each of the trees.
\end{proof}

Next we give two lemmas to bound the vertices in $V_1$ and edges in $E_1$.

\begin{lemma}
\label{lemma:bounding-V1-single-vertex}
For a vertex $f\in S$, the number of vertices in $\mathcal{F}$ that form tree-sink structures with $f$ as a sink is at most $3k$ in a YES instance.
\end{lemma}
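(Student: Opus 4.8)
The plan is to bound the number of vertices in $\mathcal{F}$ adjacent to a fixed sink $f \in S$ that participate in tree-sink structures with $f$ as their sink, using the lower bounds from Section~\ref{sec:structures} together with the assumption that $G$ admits a tracking set of size at most $k$. First I would observe that any vertex $w \in V(\mathcal{F})$ that forms a tree-sink structure with $f$ as sink must have $f \in N(w)$, so such vertices are spread among the trees of $\mathcal{F}$ that contain a vertex adjacent to $f$; call these trees $Tr_1, \dots, Tr_p$. Within a single tree $Tr_j$, the set of vertices adjacent to $f$ together with $f$ (and the part of $Tr_j$ needed to connect them) induces a tree-sink structure with $f$ as sink: take the minimal subtree $\widehat{Tr_j}$ of $Tr_j$ spanning all neighbors of $f$ in $Tr_j$; then $\widehat{Tr_j} \cup \{f\}$ is a tree-sink structure, and every leaf of $\widehat{Tr_j}$ is a neighbor of $f$.

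Next I would invoke Corollary~\ref{corollary:tree-generalized}: if $\widehat{Tr_j}$ has $\delta_j$ leaves (equivalently, $f$ has $\delta_j$ neighbors in $\widehat{Tr_j}$), then the tree-sink structure formed by $\widehat{Tr_j} \cup \{f\}$ forces at least $\delta_j - 2$ trackers into $V(\widehat{Tr_j}) \setminus \{f\} \subseteq V(Tr_j)$ — provided $f$ is a local source or local destination of the relevant induced subgraph, which I would justify via Lemma~\ref{lemma:induced} as in the proof of Lemma~\ref{lemma:flower}. Crucially, for distinct trees $Tr_j$ these tracker sets are disjoint, since $V(Tr_j) \cap V(Tr_{j'}) = \emptyset$ and none of them contains $f$. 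Summing, $G$ needs at least $\sum_{j=1}^{p} (\delta_j - 2)$ trackers inside $\mathcal{F}$. Additionally, by Lemma~\ref{lemma:flower}, if $f$ is a sink of $p \geq 2$ tree-sink structures then at least $p$ trackers are required (one per tree); more carefully, combining Lemma~\ref{lemma:flower} with the per-tree bound gives that $G$ needs at least $\sum_{j=1}^{p} \max(\delta_j - 2, 1) \geq \sum_{j=1}^{p}(\delta_j - 1)$ trackers. Since $G$ has a tracking set of size at most $k$, we get $\sum_{j=1}^{p}(\delta_j - 1) \leq k$.

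Now I would count the vertices. The vertices in $\mathcal{F}$ forming tree-sink structures with $f$ as sink all lie in $\bigcup_j \widehat{Tr_j}$; after the preprocessing reduction rules (Rules~\ref{red:stpath}–\ref{red:deg-2-triangle}), a tree $\widehat{Tr_j}$ with $\delta_j$ leaves all adjacent to the common vertex $f$ has only $O(\delta_j)$ vertices — indeed the degree-two reduction rules (Rules~\ref{red:twoinseries}, \ref{red:deg-2-triangle}) force internal non-branching paths to have length bounded by a constant and force branching vertices to be proportional in number to the leaves, so $|V(\widehat{Tr_j})| \leq c\,\delta_j$ for a small constant $c$ (one should check $c=3$ suffices, which is where the factor $3k$ in the statement comes from). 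Then the total is $\sum_j |V(\widehat{Tr_j})| \leq 3 \sum_j \delta_j \leq 3\big(\sum_j (\delta_j - 1) + p\big) \leq 3 \cdot 2k$ at worst, and a tighter accounting using $\delta_j \geq 2$ and the fact that $\sum_j(\delta_j-1)\le k$ yields the claimed bound of $3k$.

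The main obstacle I expect is the bookkeeping that turns "$\sum_j(\delta_j - 1) \le k$" into "$\sum_j |V(\widehat{Tr_j})| \le 3k$": one must pin down exactly how many internal (non-leaf, non-sink) vertices of each $\widehat{Tr_j}$ survive the reduction rules, and argue that every maximal degree-two path in $\widehat{Tr_j}$ is short and that the number of branching vertices is at most $\delta_j - 1$, so that $|V(\widehat{Tr_j})| = (\text{leaves}) + (\text{branch vertices}) + (\text{subdivision vertices}) \le \delta_j + (\delta_j - 1) + O(1)$, carefully tuned to hit the constant $3$. A secondary subtlety is verifying the local-source/local-destination hypothesis needed to apply Corollary~\ref{corollary:tree-generalized} to each $\widehat{Tr_j} \cup \{f\}$ simultaneously; this is handled exactly as in Lemma~\ref{lemma:flower}, using that a path reaching $f$ through one tree makes $f$ a local source or destination for the subgraph built from another tree.
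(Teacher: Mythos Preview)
Your proposal has the right ingredients (Corollary~\ref{corollary:tree-generalized} for the per-tree lower bound, Lemma~\ref{lemma:flower} to bound the number of trees), but you misidentify the quantity being bounded and this drags you into an unnecessary and unsound detour. The set $V_f$ in the lemma consists only of the \emph{neighbors} of $f$ in $\mathcal{F}$ that lie in trees having at least two vertices adjacent to $f$; you state this yourself in the first paragraph, but then in the third paragraph you switch to bounding $\sum_j |V(\widehat{Tr_j})|$, i.e.\ all vertices of the minimal spanning subtrees, which is a strictly larger set. Once you count only neighbors, the argument is a two-line computation: with $l_i$ the number of neighbors of $f$ in the $i$-th tree and $x$ the number of such trees, Corollary~\ref{corollary:tree-generalized} gives $\sum_i (l_i-2)\le k$ (the trackers lie in disjoint trees), and Lemma~\ref{lemma:flower} gives $x\le k$; hence $|V_f|=\sum_i l_i = \sum_i(l_i-2)+2x \le k+2k = 3k$. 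This is exactly the paper's proof, and the constant $3$ falls out with no further work.

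Two concrete errors in your version are worth flagging. First, the inequality $\max(\delta_j-2,1)\ge \delta_j-1$ that you use to merge the two lower bounds is false whenever $\delta_j\ge 3$ (for $\delta_j=4$ it reads $2\ge 3$); the two bounds must be applied separately as above. Second, your attempt to bound $|V(\widehat{Tr_j})|$ via Reduction Rules~\ref{red:twoinseries} and~\ref{red:deg-2-triangle} does not go through: those rules act on degree-two vertices of $G$, but an internal vertex of $\widehat{Tr_j}$ that looks like a degree-two ``subdivision vertex'' inside $\widehat{Tr_j}$ may well have further neighbours in $S$ or in $Tr_j\setminus \widehat{Tr_j}$, so the rules need not have removed it. Fortunately, none of this is needed once you count the right set.
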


\begin{proof}
Let $f\in S$ and $V_f\subseteq V(\mathcal{F})$ be the set of vertices that form tree-sink structures with $f$ as a sink. Let  $x$ be the number of trees that form tree-sink structures with $f$ as sink, each with $l_i$ number of vertices adjacent to $f$, where $i\in [x]$. Note that $|V_f|=\sum_{i=1}^x l_i$.

From Corollary~\ref{corollary:tree-generalized} it is known that if a tree-sink structure is formed such that the sink is adjacent to $\delta$ vertices of the tree, then at least $\delta-2$ trackers are required in the tree vertices. 
Hence, each of the trees forming tree-sink structures with $f$ as sink, require $l_i-2$ trackers in their vertex set, $i\in[x]$. Note that a tracker in one tree of a tree-sink structure cannot act as a tracker for a tree-sink structure with a disjoint tree. 
Since the total budget for trackers is $k$, $\sum_{i=1}^x (l_i -2) \leq k$.  From Lemma~\ref{lemma:flower}, it follows that $f$ can be a sink for at most $k$ tree-sink structures. Thus $x\leq k$. It follows that $|V_f|=\sum_{i=1}^x l_i \leq 3k$. 
\end{proof}

\begin{lemma}
\label{lemma:bounding-V1}
The number of vertices in $\mathcal{F}$ that share neighbors in $S$ with vertices from the same tree is at most $6k^2$ and the number of edges between these vertices and $S$ is at most $6k^2$ in a YES instance.
\end{lemma}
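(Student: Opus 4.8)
The plan is to combine Lemma~\ref{lemma:bounding-V1-single-vertex} with a counting argument over the vertices of $S$. First I would observe that every vertex $f \in V(\mathcal{F})$ that lies in $V_1$ — i.e., that shares a neighbor in $S$ with another vertex from the same tree of $\mathcal{F}$ — must, together with that shared neighbor and the tree path connecting $f$ to the other vertex, participate in a tree-sink structure whose sink is some vertex of $S$. More precisely, if $f$ and $f'$ belong to the same tree $Tr$ of $\mathcal{F}$ and both are adjacent to the same $w \in S$, then the (unique) path in $Tr$ between $f$ and $f'$, together with $w$, forms a tree-sink structure (a path is a tree, and both its endpoints are adjacent to the sink $w$). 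Hence every vertex of $V_1$ is counted among the vertices that form tree-sink structures with some $f \in S$ as sink.

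Next I would apply Lemma~\ref{lemma:bounding-V1-single-vertex}: for each fixed $f \in S$, the number of vertices of $\mathcal{F}$ that form tree-sink structures with $f$ as sink is at most $3k$ in a YES instance. Summing over all $f \in S$ and using $|S| \le 2k$ (guaranteed by the reduction rule invoking the $2$-approximation for FVS), we get $|V_1| \le |S| \cdot 3k \le 2k \cdot 3k = 6k^2$. For the edge bound, I would count the edges of $E_1$, i.e., edges between $V_1$ and $S$. Here I would use the local structure more carefully: an edge $(f, w)$ with $f \in V_1$, $w \in S$ contributes to a tree-sink structure with sink $w$ only if $f$ is a neighbor of $w$ inside that structure, and Lemma~\ref{lemma:bounding-V1-single-vertex}'s proof already accounts for the count $\sum_i l_i \le 3k$ of such neighbor-vertices per sink $w$. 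Since each edge $(f,w) \in E_1$ corresponds to $f$ being one of these $\le 3k$ tree-neighbors of $w$, summing over $w \in S$ gives $|E_1| \le |S| \cdot 3k \le 6k^2$ as well.

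The main obstacle I anticipate is the bookkeeping around double-counting and the precise correspondence between "vertex in $V_1$" and "vertex participating in a tree-sink structure with a sink in $S$." In particular I must be careful that when $f$ shares neighbors in $S$ with several different vertices of its tree (possibly via several different sinks $w_1, w_2, \dots \in S$), $f$ is counted at each relevant sink, but this only inflates the bound — it does not break it — so the over-counting remark made earlier in the section (that over-counting is tolerated since it does not change the asymptotics) applies. A secondary subtlety is ensuring that the tree-sink structure extracted is genuinely a valid subgraph with a local source/destination at the sink so that Corollary~\ref{corollary:tree-generalized} (via Lemma~\ref{lemma:bounding-V1-single-vertex}) is applicable; since $G$ has been preprocessed so that every vertex and edge lies on an \stpath, and the sink $w \in S$ has neighbors reaching both $s$ and $t$ through the rest of the graph, this holds. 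Modulo these checks, the bound follows immediately from the per-sink estimate.
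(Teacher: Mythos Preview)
Your proposal is correct and follows essentially the same approach as the paper: observe that $V_1$ vertices participate in tree-sink structures with sinks in $S$, invoke Lemma~\ref{lemma:bounding-V1-single-vertex} to bound the per-sink contribution by $3k$, and multiply by $|S|\le 2k$ to obtain $6k^2$ for both the vertex and edge counts. Your additional care about double-counting and the validity of the tree-sink structure is more explicit than the paper's version, but the argument is the same.
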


\begin{proof}
When two or more vertices from a tree in $\mathcal{F}$ share a common neighbor, say $f\in S$, they form a tree-sink structure, the tree being the minimal connected subtree containing all neighbors of $f$ in that tree, and $f$ being the sink.
Due to Lemma~\ref{lemma:bounding-V1-single-vertex} it is known that for a vertex $f\in S$, at most $3k$ vertices from $\mathcal{F}$ form tree-sink structures with $f$ as a sink.
Since $|S|\leq 2k$, the total number of vertices in $V_1$ is at most $(2k) 3k$ i.e. $6k^2$. As we considered only single edges between the sink and its neighbors in the trees (subgraphs in $\mathcal{F}$) of the tree-sink structures, $|E_1|\leq 6k^2$.
\end{proof}

\begin{Reduction Rule}
\label{red:bounding-V1}
If the number of vertices that share neighbors in $S$ with vertices from the same tree are more than $6k^2$, then we return a NO instance.
\end{Reduction Rule}

\begin{lemma}
\label{lemma:red-rule-V1}
Reduction Rule~\ref{red:bounding-V1} is safe and can be applied in polynomial time.
\end{lemma}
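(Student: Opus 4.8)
The plan is to show that Reduction Rule~\ref{red:bounding-V1} merely formalizes the counting bound established in Lemma~\ref{lemma:bounding-V1}, so its safeness is almost immediate, and then to argue that the condition it checks can be evaluated in polynomial time. For safeness, recall that the rule only ever outputs a NO instance (it never modifies the graph otherwise), so it suffices to verify that whenever the rule fires, the instance genuinely is a NO instance. Suppose $(G,s,t,k)$ is a YES instance, i.e.\ $G$ admits a tracking set of size at most $k$. Then by Lemma~\ref{lemma:bounding-V1} the number of vertices of $\mathcal{F}=G\setminus S$ that share a neighbor in $S$ with another vertex from the same tree is at most $6k^2$. Contrapositively, if this count exceeds $6k^2$, then $G$ has no tracking set of size at most $k$, and returning NO is correct. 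Hence the rule is safe.

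For the running time, I would describe an explicit procedure to compute, or at least certify, the size of the set $V_1$. First compute the $2$-approximate FVS $S$ (already available from the preceding reduction rule) and the forest $\mathcal{F}=G\setminus S$, identifying its connected components (trees) in linear time. Then for each $f\in S$ and each tree $Tr$ of $\mathcal{F}$, look at $N(f)\cap V(Tr)$; if $|N(f)\cap V(Tr)|\ge 2$, every vertex of the minimal subtree of $Tr$ spanning $N(f)\cap V(Tr)$ is a member of $V_1$ (these are exactly the vertices forming a tree-sink structure with $f$ as sink inside $Tr$). Marking all such vertices over all pairs $(f,Tr)$ and taking the union gives $V_1$; the whole computation is clearly polynomial (a crude bound of $\Oh(n^2)$ or $\Oh(n^3)$ suffices, matching the style of the earlier reduction-rule running-time claims). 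Once $|V_1|$ is known, comparing it with $6k^2$ and outputting NO when it is larger takes constant time.

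I do not expect any genuine obstacle here: the mathematical content is entirely carried by Lemma~\ref{lemma:bounding-V1}, and the only thing to be careful about is to phrase the polynomial-time claim in terms of a concrete way of detecting the shared-neighbor structure (iterating over pairs of a vertex in $S$ and a tree of $\mathcal{F}$, or equivalently over pairs of vertices in $\mathcal{F}$ and checking for a common neighbor in $S$, just as in the proof of Lemma~\ref{lemma:red-disjoint-m-paths}). The proof I would write is therefore short: one paragraph invoking Lemma~\ref{lemma:bounding-V1} for safeness via the contrapositive, and one or two sentences for the polynomial-time implementation, closing with \qed.

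\begin{proof}
Reduction Rule~\ref{red:bounding-V1} never alters $G$ or $k$; it only possibly declares the instance to be a NO instance. Hence to prove safeness it suffices to show that whenever the rule fires, the instance is indeed a NO instance. Suppose, for contradiction, that $(G,s,t,k)$ is a YES instance but more than $6k^2$ vertices of $\mathcal{F}=G\setminus S$ share a neighbor in $S$ with a vertex from the same tree. By Lemma~\ref{lemma:bounding-V1}, in a YES instance the number of such vertices is at most $6k^2$, a contradiction. Therefore, if the rule fires, $(G,s,t,k)$ has no tracking set of size at most $k$, so returning a NO instance is correct, and the rule is safe.

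To apply the rule we compute $\mathcal{F}=G\setminus S$ and its connected components. For every vertex $f\in S$ and every tree $Tr$ of $\mathcal{F}$, we examine $N(f)\cap V(Tr)$; if $|N(f)\cap V(Tr)|\ge 2$, all vertices on the minimal subtree of $Tr$ spanning $N(f)\cap V(Tr)$ belong to the set in question. Taking the union of these vertices over all pairs $(f,Tr)$ yields exactly the set of vertices that share a neighbor in $S$ with a vertex from the same tree, and comparing its size with $6k^2$ decides whether to output a NO instance. Since $|S|\le 2k < n$ and the number of trees is at most $n$, this can be carried out in polynomial time (e.g.\ in $\Oh(n^3)$ time). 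Hence the rule is applicable in polynomial time.
\qed
\end{proof}
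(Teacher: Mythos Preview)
Your safeness argument is exactly the paper's: both simply invoke Lemma~\ref{lemma:bounding-V1} (you spell out the contrapositive explicitly, the paper does so in one sentence), and your polynomial-time strategy of iterating over pairs $(f,Tr)$ with $f\in S$ and $Tr$ a tree of $\mathcal{F}$ matches the paper's $\Oh(n^3)$ implementation.

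There is, however, a small but genuine slip in your description of \emph{which} vertices to collect. You write that when $|N(f)\cap V(Tr)|\ge 2$, ``all vertices on the minimal subtree of $Tr$ spanning $N(f)\cap V(Tr)$ belong to the set in question.'' That is not the set $V_1$: by definition, $V_1$ consists of vertices that \emph{share a neighbor in $S$} with another vertex of the same tree, so a vertex must itself be adjacent to $f$ to be counted. Internal vertices of the spanning subtree that are not adjacent to $f$ (these may well lie in $V_4$) must not be included. The paper's procedure is accordingly to count, for each tree with at least two $f$-neighbors, only the vertices of $N(f)\cap V(Tr)$. Your version overcounts and could cause the rule to fire on a YES instance, which would make it unsafe as stated. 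The fix is trivial---replace ``all vertices on the minimal subtree \ldots'' by ``all vertices of $N(f)\cap V(Tr)$''---after which your proof coincides with the paper's.
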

\begin{proof}
Safeness of the reduction rule follows from Lemma~\ref{lemma:bounding-V1}. To apply the rule, for each vertex $f$ in $S$, we consider the subgraph $G'$ induced by $G\setminus (S\setminus\{f\})$. There can be at most $n$ trees in $G'\setminus \{f\}$, and each tree can have at most $n$ vertices. For each tree we check if at least two vertices are adjacent to $f$. For tree that have at least two vertices adjacent to $f$, we count the number of such vertices. This can be done in $\Oh(n^2)$ time. Since $|S|\leq 2k$, the total time taken will be $\Oh(n^3)$.
\end{proof}

\subsection{Bounding $V_2$}
\label{subsec:V2}

Recall that
$V_2=\{ v\in\mathcal{F} \mid \exists u\in V(Tr), \textrm{ where } Tr \in \mathcal{F},  v\notin V(Tr),  N_S(u)\cap N_S(v)\neq\emptyset \}$.
Hence $V_2$ is the set of vertices that share a neighbor in $S$ with a vertex from another tree.
Note that the vertices in $V_2$ may or may not share a neighbor with a vertex in the same tree. As mentioned before we allow the possible over counting of vertices of $V_1$ here as the final bound calculated is still $\Oh(k^2)$. 

%

Observe that if a vertex $v\in V_2$ belongs to a tree $Tr\subseteq G(\mathcal{F})$ such that $|N(V(Tr))\cap S|=1$, then $v$ belongs to $V_1$ as well. Thus in such a case we need not count $v$ in $V_2$.
Excluding such vertices, we can assume that for each vertex $v\in V_2$ it holds that $|N(V(Tr))\cap S|\geq 2$, where $Tr\in\mathcal{F}$ is the tree to which $v$ belongs. This implies that either $v$ has at least two neighbors in $S$, or there exists a vertex in $V(Tr)\setminus\{v\}$ that is adjacent to a vertex in $S\setminus\{f\}$, where $f\in N(v)\cap S$ is adjacent to a vertex in another tree in $\mathcal{F}$. Since we need an upper bound on $V_2$, we assume the second case, i.e. for each vertex in $V_2$ there exists another vertex in the same tree and has a different neighbor in $S$.

Let $a,b,c\in V(Tr)$ where $Tr\subseteq G(\mathcal{F})$. If $u,v\in S$ and $u,v \in (N(a)\cup N(b) \cup N(c))$ then at least two vertices among $a,b,c$ share a neighbor in $S$ (either $u$ or $v$) and thus belong to $V_1$.
Hence we can assume that a pair of vertices in $S$ is adjacent to at most two vertices from $V_2$ from each tree in $\mathcal{F}$. 


\begin{lemma}
\label{lemma:no-tracker-path-vertices}
The number of vertices in $V(\mathcal{F})\cap N(S)$ that belong to vertex disjoint paths without any trackers, between a pair of vertices in $S$ are at most $6k(2k-1)$ in a YES instance.
\end{lemma}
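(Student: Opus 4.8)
The plan is to fix a pair $u,v\in S$ together with a maximal family $\mathcal{Q}$ of pairwise internally vertex-disjoint $u$-$v$ paths whose internal vertices lie in the forest $\mathcal{F}$ (by Lemma~\ref{lemma:disjoint}, $|\mathcal{Q}|\le k+1$). I would first observe that at most one member of $\mathcal{Q}$ can be tracker-free: if $P$ and $P'$ were both tracker-free, then $C=P\cup P'$ would be a tracker-free cycle, and applying Lemma~\ref{lemma:induced} to $C$ yields a local source $p$ and a local destination $q$ with the accompanying internally disjoint paths $P_{sp}$ and $P_{qt}$ meeting $C$ only at $p,q$; the two $p$-$q$ arcs $A,B$ of $C$ then give two distinct \stpaths $P_{sp}\cdot A\cdot P_{qt}$ and $P_{sp}\cdot B\cdot P_{qt}$ with exactly the same sequence of trackers (all trackers lie on $P_{sp}\cup P_{qt}$), contradicting that $G$ has a tracking set of size at most $k$. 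Hence it suffices to bound the number of vertices of $\mathcal{F}\cap N(S)$ on the single tracker-free path $P$, if one exists.

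Since the internal vertices of $P$ form a path in the forest $\mathcal{F}$, they all lie in one tree of $\mathcal{F}$. For $z\in S$ let $\delta_z$ denote the number of internal vertices of $P$ adjacent to $z$. The two structural claims I would prove are: (i) $\delta_z\le 2$ for every $z\in S$; and (ii) if $\delta_z=2$ then $z$ itself is a tracker. For (i): the minimal subpath $Q$ of $P$ containing all neighbours of $z$ on $P$ is, together with $z$, a tree-sink structure with sink $z$ and $|N(z)\cap V(Q)|=\delta_z$, so Corollary~\ref{corollary:tree-generalized} forces at least $\delta_z-2$ trackers inside $V(Q)\subseteq V(P)$; as $P$ is tracker-free this gives $\delta_z\le 2$. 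For (ii): if $w,w'$ are the two neighbours of $z$ on $P$, the subpath $P[w,w']$ together with the edges $zw$ and $zw'$ forms a cycle $C'$ that is tracker-free except possibly at $z$; if $z$ were not a tracker, the same argument as above (Lemma~\ref{lemma:induced} applied to $C'$) would produce two distinct \stpaths with the same tracker sequence, which is impossible. Thus every $z\in S$ with $\delta_z=2$ lies in a minimum tracking set of $G$, of which there are at most $k$.

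Finally I would combine the bounds by a counting argument. Every vertex $w$ of $V(P)\cap\mathcal{F}\cap N(S)$ is adjacent to at least one vertex of $S$, so the number of such vertices is at most $\sum_{z\in S}\delta_z$. Writing $a=|\{z\in S:\delta_z=2\}|$ and $b=|\{z\in S:\delta_z=1\}|$, claim (ii) gives $a\le k$, while the reduction rule bounding the feedback vertex set gives $a+b\le|S|\le 2k$. Therefore $\sum_{z\in S}\delta_z=2a+b=a+(a+b)\le k+2k=3k$, which is exactly the claimed bound.

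I expect the main obstacle to be claim (ii): arguing cleanly that a vertex of $S$ with two neighbours on a tracker-free path must be a tracker requires setting up the cycle $C'$ so that Lemma~\ref{lemma:induced} genuinely yields two distinct simple \stpaths, and checking that both $p$-$q$ arcs are internally tracker-free (which holds because $P[w,w']$ lies inside the tracker-free path $P$ and $z$ is the only other vertex of $C'$). A secondary point to verify is that the internal vertices of $P$ really stay inside $\mathcal{F}$ so that the tree-sink argument of claim (i) applies; this is precisely the hypothesis that we are dealing with vertex-disjoint $u$-$v$ paths through the forest.
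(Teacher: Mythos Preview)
Your argument has a genuine gap at the first step. The claim that at most one member of $\mathcal{Q}$ can be tracker-free relies on $C=P\cup P'$ being a tracker-free cycle, but $u,v\in S$ lie on $C$ and may themselves be trackers; ``tracker-free path'' in this lemma refers to the internal vertices only. The paper's Figure~\ref{fig:no-trackers} exhibits exactly this situation: $u,v\in S$ are both trackers and there are \emph{three} internally tracker-free vertex-disjoint $u$--$v$ paths. In that configuration the local source--destination pair supplied by Lemma~\ref{lemma:induced} for the induced subgraph need not be $\{u,v\}$, and the two arcs of any cycle through $u$ and $v$ are distinguished by which of $u,v$ they contain, so your two constructed \stpaths do \emph{not} have the same tracker sequence. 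Hence ``at most one tracker-free path per pair'' is false, and the reduction to a single path collapses. (The same oversight would bite claim~(ii) if $z\in\{u,v\}$.)

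There is also a scope problem. You fix a single pair $u,v$ and bound the $\mathcal{F}\cap N(S)$ vertices on \emph{one} path $P$ by $3k$, but the lemma, as it is used in Lemma~\ref{lemma:bounding-V2}, concerns the total over \emph{all} pairs in $S$; your $\sum_z\delta_z$ count says nothing about how many pairs contribute. The paper's route is essentially orthogonal to yours: it accepts that up to three tracker-free paths may exist between a pair, observes that this forces both $u$ and $v$ to be trackers while $\{u,v\}$ is not a local source--destination pair for the subgraph, and then argues that any two such pairs must be \emph{disjoint} in $S$. Hence there are at most $k/2$ such pairs, each contributing at most $6$ vertices of $\mathcal{F}\cap N(S)$, giving $3k$ in total. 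This disjointness-of-pairs step is what you are missing, and the per-path $\delta_z$ accounting cannot replace it.
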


\begin{proof}
Let $u,v\in S$ be a pair of vertices such that there exist three vertex disjoint paths (comprising of vertices from $\mathcal{F}$) between them. Let $G'$ be the subgraph induced by $u$ and $v$ along with the three vertex disjoint paths between them. If $u$ and $v$ are trackers, and are not a local source-destination pair for $G'$, then it is possible that no trackers are required on the three paths between them. See Figure~\ref{fig:no-trackers}. 

\begin{figure}[ht]
\centering
\includegraphics[scale=0.40]{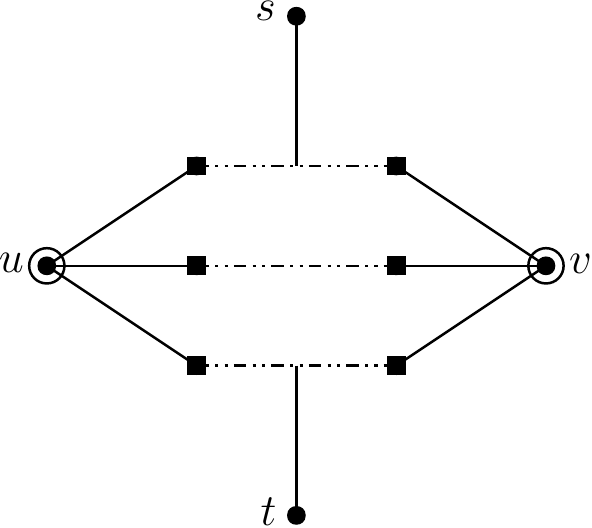} 
\caption{Three paths without any trackers between $u,v\in S$, when $u,v$ are trackers but do not form a local source-destination pair. Square vertices belong to $X\subseteq V_2'$.} 
\label{fig:no-trackers}
\end{figure}

Since $|S|\leq 2k$, there exist at most $2k\choose 2$ such pairs of $u,v\in S$. Further, for each of the three vertex disjoint paths between a pair $u,v\in S$, passing from vertices in $\mathcal{F}$ (with $u,v$ as end points), can account for at most two vertices from $V(\mathcal{F})\cap N(\{u,v\})$. Hence, the total number of vertices in $V(\mathcal{F})\cap N(S)$ that form vertex disjoint paths between vertices of $S$, and do not contain any trackers themselves are at most $6{2k\choose 2}$ i.e. $6k(2k-1)$.
\end{proof}

Next we give a lemma to bound the vertices in $V_2$.

\begin{lemma}
\label{lemma:bounding-V2}
The number of vertices in $\mathcal{F}$ that share a neighbor with a vertex from another tree is at most $28k^2-14k$ and the number of edges between these vertices and $S$ is at most $32k^2-16k$ in a YES instance.
\end{lemma}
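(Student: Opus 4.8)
The plan is to charge every vertex of $V_2$ to an unordered pair of vertices of $S$ and then to bound, for a fixed pair, how many vertices can be charged to it. Since $|S|\le 2k$ there are at most $\binom{2k}{2}=2k^2-k$ such pairs, so a constant per-pair bound (which works out to ten) together with one additive $\Oh(k)$ correction term coming from Lemma~\ref{lemma:no-tracker-path-vertices} produces the bound $10(2k^2-k)+3k=20k^2-7k$; the same accounting, done on edges instead of vertices, gives the bound on $|E_2|$.

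First I would reuse the two simplifications established just before the statement: every $v\in V_2$ may be assumed to lie in a tree $Tr\subseteq\mathcal F$ with $|N(V(Tr))\cap S|\ge 2$, and a fixed pair $\{u,w\}\subseteq S$ is adjacent to at most two vertices of $V_2$ coming from any single tree of $\mathcal F$ (otherwise two of them share a neighbour in $S$ and already lie in $V_1$). For $v\in V_2$ in a tree $Tr$, I fix a neighbour $f\in N(v)\cap S$ that is also adjacent to a vertex of another tree, and a vertex $v'$ of $Tr$ with a neighbour $f'\in S\setminus\{f\}$ (such $v'$ exists by the recalled simplification); then the $v$-$v'$ path inside $Tr$ together with the edges $(v,f)$ and $(v',f')$ is an $f$-$f'$ path whose interior lies entirely in $\mathcal F$, and I charge $v$ to the pair $\{f,f'\}$.

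Next I would fix a pair $\{u,w\}\subseteq S$ and bound the number of vertices of $V_2$ charged to it. All such vertices lie on $u$-$w$ paths whose interior is in the forest $\mathcal F$; since $\mathcal F$ is a forest, each such interior stays inside a single tree, so each tree contributes at most two of them by the per-tree fact, and it remains to bound how many trees can simultaneously carry such a path. Distinct trees carrying internally disjoint $u$-$w$ paths yield vertex-disjoint $u$-$w$ paths in $G$, so Lemma~\ref{lemma:disjoint} caps their number at $k+1$; moreover, by the tracker-forcing argument of Lemma~\ref{lemma:disjoint-simple}, all but one of these paths must carry a tracker, \emph{unless} $u$ and $w$ are themselves trackers that do not form a local source-destination pair for the subgraph induced by those paths. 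That exceptional configuration is exactly the one handled by Lemma~\ref{lemma:no-tracker-path-vertices}, which globally accounts for at most $3k$ vertices of $\mathcal F$ and which I would carve off first; for every remaining pair, the tracker-forcing constraint together with the global budget $k$ bounds the number of relevant trees (hence, via the factor of two, the per-pair contribution) by a constant, and summing over the $\le 2k^2-k$ pairs and adding the $3k$ term gives $|V_2|\le 20k^2-7k$. For $|E_2|$ I would run the same charging while counting edges, using that a counted vertex of $V_2$ cannot have many neighbours in $S$ and two vertices of $S$ cannot have many common neighbours in $\mathcal F$, since either would create more than $k+1$ vertex-disjoint paths between two vertices, contradicting Lemma~\ref{lemma:disjoint}; hence each counted vertex contributes only $\Oh(1)$ edges of $E_2$ and the same accounting yields $|E_2|\le 20k^2-7k$.

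The step I expect to be the main obstacle is making the per-pair count tight enough to land on $20k^2-7k$ rather than a cruder $\Oh(k^2)$. A single tree can be adjacent to many vertices of $S$ and therefore be charged under many pairs, so one must verify that a fixed vertex of $V_2$ is not over-counted by more than a controlled factor. One must also separate cleanly the ``some $u$-$w$ path needs a tracker'' situation (charged against the budget $k$ through Lemma~\ref{lemma:disjoint} and Lemma~\ref{lemma:disjoint-simple}) from the ``$u,w$ are trackers but not a local source-destination pair'' situation (charged through Lemma~\ref{lemma:no-tracker-path-vertices}), so that the two contributions add rather than multiply and no tracker of the budget is spent twice; this clean separation, combined with the per-tree bound of two, is precisely what forces the stated constant.
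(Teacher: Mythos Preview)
Your overall charging idea and your use of Lemma~\ref{lemma:no-tracker-path-vertices} as a carved-off $3k$ term are reasonable, but the route is genuinely different from the paper's, and the step you flag as the main obstacle is in fact a real gap.

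The paper does \emph{not} aim for a uniform per-pair constant. It splits $V_2$ into $V_2''$ (pairs $\{u,w\}\subseteq S$ for which at most one tree of $\mathcal F$ carries a $u$--$w$ path) and $V_2'$ (pairs for which at least two trees do). For $V_2''$ the bound is indeed $2$ per pair, giving $2\binom{2k}{2}=4k^2-2k$. For $V_2'$ the paper first isolates your $3k$ exception via Lemma~\ref{lemma:no-tracker-path-vertices}, then, for the remaining pairs (those with at least four contributing trees), sums the tracker requirement $\sum_i(|P_i|-3)\le k$ to get at most $8k$ ``anchor'' vertices, and finally absorbs the sharing of trackers across pairs by a multiplicative factor $2k-1$, obtaining $|Y|\le 8k(2k-1)=16k^2-8k$. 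Adding $3k$, $16k^2-8k$ and $4k^2-2k$ gives $20k^2-7k$.

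Your claim that ``the tracker-forcing constraint together with the global budget $k$ bounds the number of relevant trees \dots\ by a constant'' is where the argument breaks. A single budget of $k$ trackers, distributed over up to $\binom{2k}{2}$ pairs, cannot yield a $k$-independent per-pair bound: the same tracker can lie on $u$--$w$ paths for many different pairs $\{u,w\}$, so you cannot charge one new tracker to each extra tree of a given pair. This is exactly why the paper does \emph{not} get a constant per pair for $V_2'$ but instead a quadratic-scale bound $16k^2-8k$, with the sharing explicitly paid for by the factor $2k-1$. Your formula $10(2k^2-k)+3k$ reproduces the target arithmetically, but the constant $10$ is not derivable from the ingredients you list; to close the argument you would need either the paper's $V_2'/V_2''$ split with its sharing factor, or a new mechanism that genuinely prevents a tracker from serving many pairs at once.
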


%
\begin{figure*}[ht]
  \centering
\begin{minipage}[b]{0.49\textwidth}
\centering
\includegraphics[width=50mm]{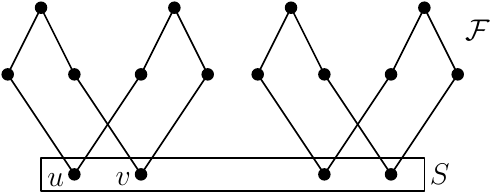}\\
{\small (a) Pairs of vertices in $\mathcal{F}$ share their neighbors with pair of vertices from another tree.}\\
\end{minipage}
 \begin{minipage}[b]{0.49\textwidth}
  \centering
 \includegraphics[width=40mm]{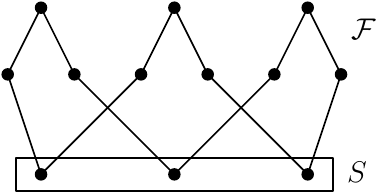}\\
{\small (b) No pair of vertices in $\mathcal{F}$ share their neighbors with pair of vertices from another tree.}\\
\end{minipage}
 \caption{Vertices in $\mathcal{F}$ sharing neighbors with vertices from other trees in $\mathcal{F}$.}
 \label{fig:sharing}
 \end{figure*}

\begin{proof}
We subdivide $V_2$ into following two sub-categories:
\begin{itemize}

\item Let $V_2'$ be the set of vertices such that each pair of vertices from a tree share their neighbors in $S$ with a pair of vertices from another tree. See Figure~\ref{fig:sharing}(a). 
Consider a pair of vertices $u,v\in S$. Observe that if pairs of vertices from different trees are incident to $u$ and $v$, they form vertex disjoint paths between $u$ and $v$ passing through the trees to which they belong.
Let $X$ be the set of vertices that form vertex disjoint paths without any trackers, between pairs of vertices of $S$. From Lemma~\ref{lemma:no-tracker-path-vertices}, $|X|\leq 6k(2k-1)$.

Next we consider those pairs of vertices in $S$ that are adjacent to at least four pairs of vertices of $V_2'$ leading to formation of at least four vertex disjoint paths. Observe that for such a pair of vertices in $S$, at least one tracker is needed on the paths passing through vertices of $V_2'$ and the pair. 
Let there be $m$ such pairs of vertices in $S$, and $Y\subseteq V_2'$ be the set of vertices considered in this case.
Let $P_i$ denote the pairs of vertices from $Y$ that are adjacent to the $i^{th}$ pair of vertices in $S$. 

From Lemma~\ref{lemma:disjoint} it is known that if there are more than $k+1$ number of vertex disjoints paths between a pair of vertices, then $G$ cannot be tracked with at most $k$ trackers. 
Then the total number of trackers required is $\sum_{i=1}^m (|P_i|-3)$. Since the total budget for the number of trackers is at most $k$, it follows that $\sum_{i=1}^m (|P_i|-3) \leq k$. Since each pair of vertices from $S$ considered here, requires at least one tracker on a vertex in $\mathcal{F}$ (as there are at least $4$ vertex disjoint paths between each pair), $m\leq k$. Hence $\sum_{i=1}^m |P_i| \leq 4k$, and hence the count of vertices forming the paths between the $m$ pairs, is less than or equal to $8k$. So far we assumed the vertex disjoint paths formed by vertices of $Y$ to be disjoint since we have counted trackers for each pair separately (and added them). However note that, trackers may be shared among different paths formed by vertices of $Y$. There might be two (or more) pairs of vertices in $S$ that share a neighbor $y\in Y$ (while forming vertex disjoint paths through other vertices in $\mathcal{F}$) and $y$ may be a tracker. 

Since $|S|\leq 2k$, at most $2k$ vertices from $S$ may be adjacent to $2k$ vertices of $Y$ from a single tree, with only one of these vertices being a tracker. Thus for each of the $8k$ vertices counted so far in $Y$, we may have at most $(2k-1)$ vertices of $Y$ that do not have a tracker. Hence $|Y|\leq 8k(2k-1)$.

Thus $|V_2'|= |X|+|Y| \leq 6k(2k-1)+8k(2k-1)=14k(2k-1)$.

\item Let $V_2''$ be the set of vertices that share a neighbor in $S$ with a vertex of another tree, but there does not exist a pair of vertices in any tree that shares their neighbors in $S$ with a pair of vertices from another tree, i.e. $V_2''=V_2\setminus V_2'$. See Figure~\ref{fig:sharing}(b). 
Observe that in this case, a pair of vertices, say $u,v\in S$, can be adjacent to a pair of vertices of $V_2$ from at most one tree. 
Since there are at most $2k \choose 2$ pairs in $S$, and each pair is adjacent to two vertices from $V_2''$, it holds that $V_2''\leq 2 {2k\choose 2} $. Hence $|V_2''|\leq 2k(2k-1)$. 
\end{itemize}
Hence $|V_2|=|V_2'|+|V_2''| \leq 16(2k-1)$ i.e. $32k^2-16k$. Since we have considered only single adjacencies between the vertices of $V_2'$ and $S$, it holds that $|E_2|\leq 32k^2-16k$.
\end{proof} 

\begin{Reduction Rule}
\label{red:bounding-V2}
If the number of vertices that share a neighbor in $S$ with vertices from same tree is more than $32k^2-16k$, then we return a NO instance.
\end{Reduction Rule}

\begin{lemma}
\label{lemma:red-rule-V2}
Reduction Rule~\ref{red:bounding-V2} is safe and can be applied in polynomial time.
\end{lemma}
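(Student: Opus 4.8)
The plan is to separate the statement into two independent parts --- safeness and a polynomial running time --- and to dispatch the first essentially by citation. For safeness I would invoke Lemma~\ref{lemma:bounding-V2}, which already establishes that whenever $(G,s,t,k)$ is a YES instance, the number of vertices of $\mathcal{F}$ that share a neighbor in $S$ with a vertex from another tree of $\mathcal{F}$ is at most $20k^2-7k$. Taking the contrapositive, if this count exceeds $20k^2-7k$ then $(G,s,t,k)$ is a NO instance, so declaring it NO is correct; and if the threshold is not exceeded the rule does nothing, so it is trivially safe. The only genuine work left is therefore to bound the time needed to evaluate the tested quantity.

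For the running time I would exhibit an explicit counting procedure. Since $S$ (a $2$-approximate FVS) is already in hand, $\mathcal{F}=G-S$ is a forest; a single traversal identifies its connected components and labels every vertex of $\mathcal{F}$ with the index of the tree containing it, in $\Oh(n^2)$ time. Then, iterating over each $f\in S$, I would read off the set of tree-labels occurring among the neighbors in $N(f)\cap V(\mathcal{F})$; if at least two distinct labels occur, then by definition every such neighbor shares the $S$-neighbor $f$ with a vertex of a different tree, so I would mark all of $N(f)\cap V(\mathcal{F})$. After all of $S$ has been processed, the marked vertices are exactly the set $V_2$ (up to the harmless overcounting already permitted in Section~\ref{subsec:V2}), a final pass returns $|V_2|$, and this is compared against $20k^2-7k$. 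Each $f\in S$ is handled in $\Oh(deg(f))$ time and $|S|\le 2k$, so the whole computation runs in $\Oh(n^2)$ time, which is polynomial.

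I do not expect a real obstacle here: the argument is bookkeeping layered on top of Lemma~\ref{lemma:bounding-V2}. The single point that needs a line of care is that the version of $V_2$ underlying that lemma deliberately omits a vertex $v$ whose tree $Tr$ satisfies $|N(V(Tr))\cap S|=1$, since such a $v$ is already counted in $V_1$; but because the reduction rule is allowed to overcount, I can either leave such vertices marked or, using the tree-labels, discard in $\Oh(deg(f))$ additional time every $f\in S$ all of whose forest-neighbors lie in trees with exactly one neighbor in $S$. Either choice keeps the procedure polynomial and the rule correct, so nothing beyond routine verification is required.
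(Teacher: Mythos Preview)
Your proposal is correct and follows essentially the same approach as the paper: safeness is dispatched by citing Lemma~\ref{lemma:bounding-V2}, and polynomial running time is shown by first labeling every vertex of $\mathcal{F}$ with its tree index and then testing, for each relevant object, whether an $S$-neighbor is shared across two distinct trees. The only cosmetic difference is that you iterate over the vertices of $S$ while the paper iterates over the vertices of $\mathcal{F}$; both yield the same $\Oh(n^2)$ bound.
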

\begin{proof}
Safeness of the reduction rule follows from Lemma~\ref{lemma:bounding-V2}. In order to implement the rule, first we go through all vertices in $\mathcal{F}$ and create a data structure that maintains which vertex belongs to which tree. This operation can be done in $\Oh(n^2)$ time. Next for each vertex $v\in\mathcal{F}$, if $Tr\in\mathcal{F}$ is the tree to which $v$ belongs, we check if a vertex in $N(v)\cap S$ has a neighbor in $V(\mathcal{F})\setminus V(Tr)$. Since the number of trees in $\mathcal{F}$ is $\Oh(n)$, we can do the check for all vertices of $\mathcal{F}$ in $\Oh(n^2)$ time.
\end{proof}

\subsection{Bounding $V_3$}
\label{subsec:V3}
Recall that $V_3=\{ v\in\mathcal{F} \mid \nexists u\in \mathcal{F} \textrm{ and } N_S(u)\cap N_S(v)=\emptyset \}$.
Hence $V_3$ is the set of vertices that do not share their neighbors in $S$ with other vertices in $\mathcal{F}$ i.e. the vertices in $V_3$ have an exclusive neighbor each in $S$. Since $|S|\leq 2k$, and each vertex in $S$ can be adjacent to only vertex of $V_3$, it holds that $|V_3|\leq 2k$. Thus we have the following lemma.

\begin{lemma}
\label{lemma:bounding-V3}
The number of vertices in $\mathcal{F}$ that do not share their neighbors in $S$ with other vertices in $\mathcal{F}$ is at most $2k$ in a YES instance.
\end{lemma}

\subsection{Bounding $V_4$}
\label{subsec:V4}

Recall that $V_4=\{v\in\mathcal{F} \mid N(v)\cap S=\emptyset\}$. Hence $V_4$ is the set of vertices in $\mathcal{F}$ that do not have any neighbors in $S$.
Note that all leaves in $\mathcal{F}$ necessarily have a neighbor in $S$ due to Reduction Rules~\ref{red:stpath} and \ref{red:no-deg-one}. Hence the vertices in $V_4$ are only the internal vertices of the trees in $\mathcal{F}$.

\begin{lemma}
\label{lemma:bounding-V4}
The number of vertices in $\mathcal{F}$ that do not have neighbors in $S$ is at most $78k^2- 15k$ in a YES instance.
\end{lemma}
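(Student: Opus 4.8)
The plan is to prove the cleaner bound $|V_4|\le 3\bigl(|V_1|+|V_2|+|V_3|\bigr)$ and then substitute $|V_1|\le 6k^2$ (Lemma~\ref{lemma:bounding-V1}), $|V_2|\le 20k^2-7k$ (Lemma~\ref{lemma:bounding-V2}) and $|V_3|\le 2k$, which yields $3(26k^2-5k)=78k^2-15k$. Two consequences of the preprocessing do all the work: (i) by Reduction Rules~\ref{red:stpath} and~\ref{red:no-deg-one}, every leaf of $\mathcal{F}$ has a neighbour in $S$, hence lies in $V_1\cup V_2\cup V_3$; and (ii) after exhaustive application of Reduction Rule~\ref{red:twoinseries}, no two vertices of degree $2$ in $G$ are adjacent. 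I would analyse one tree $Tr$ of $\mathcal{F}$ at a time (a singleton tree is an isolated vertex of $\mathcal{F}$, which lies on some \stpath and therefore has at least two neighbours in $S$, so it is not in $V_4$), writing $L_{Tr}$ for its leaves, $B_{Tr}$ for its vertices of degree $\ge 3$, and $W_{Tr}=V(Tr)\cap(V_1\cup V_2\cup V_3)$.

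Next I would split $V_4\cap V(Tr)$ into $V_4^B$, the vertices of degree $\ge 3$ in $Tr$, and $V_4^D$, those of degree $2$ in $Tr$. Since $V_4^B\subseteq B_{Tr}$ and $|B_{Tr}|\le |L_{Tr}|-1$ in any tree, $|V_4^B\cap V(Tr)|\le |L_{Tr}|-1$. For $V_4^D$, note that such a vertex has degree $2$ in $Tr$ and, being in $V_4$, no neighbour in $S$, hence degree exactly $2$ in $G$; by fact (ii) the set $V_4^D\cap V(Tr)$ is therefore independent in $Tr$. Counting incidences between the $|V(Tr)|-1$ edges of $Tr$ and the vertices of $V_4^D\cap V(Tr)$ --- each such vertex lies on $2$ edges and, by independence, each edge meets at most one such vertex --- gives $2\,|V_4^D\cap V(Tr)|\le |V(Tr)|-1 = |W_{Tr}|+|V_4^B\cap V(Tr)|+|V_4^D\cap V(Tr)|-1$, hence $|V_4^D\cap V(Tr)|\le |W_{Tr}|+|V_4^B\cap V(Tr)|-1\le |W_{Tr}|+|L_{Tr}|-2$. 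Adding the two estimates, $|V_4\cap V(Tr)|\le 2|L_{Tr}|+|W_{Tr}|-3$.

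To finish, I would sum over the trees of $\mathcal{F}$. Using fact (i), every leaf of $\mathcal{F}$ lies in $V_1\cup V_2\cup V_3$, so $\sum_{Tr}|L_{Tr}|\le |V_1\cup V_2\cup V_3| = \sum_{Tr}|W_{Tr}|$; therefore $|V_4|\le \sum_{Tr}\bigl(2|L_{Tr}|+|W_{Tr}|\bigr)\le 3\sum_{Tr}|W_{Tr}| = 3\,|V_1\cup V_2\cup V_3|\le 3(|V_1|+|V_2|+|V_3|)$, and the numerical bound $78k^2-15k$ follows.

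The step I expect to be the real obstacle is the treatment of the degree-$2$ vertices $V_4^D$: in an arbitrary tree these could form arbitrarily long induced paths, so the whole estimate hinges on first establishing fact (ii) from Reduction Rule~\ref{red:twoinseries} and then exploiting it through the simple incidence count above. A secondary nuisance is making fact (i) watertight --- ruling out vertices of degree $1$ in $G$ other than possibly $s$ and $t$, and correctly accounting for isolated vertices of $\mathcal{F}$ --- which is why I would handle singleton trees separately and lean on Reduction Rules~\ref{red:stpath} and~\ref{red:no-deg-one}.
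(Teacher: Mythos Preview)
Your proposal is correct and follows essentially the same route as the paper: bound the branching vertices of each tree by its leaves, bound the degree-$2$ members of $V_4$ using Reduction Rule~\ref{red:twoinseries}, conclude $|V_4|\le 3(|V_1|+|V_2|+|V_3|)$, and substitute the earlier bounds. Your incidence-counting argument for $V_4^D$ is a cleaner and more explicit version of the paper's one-line claim that ``between every closest pair of vertices with degree greater than~$2$ there can exist at most one vertex with degree~$2$''; otherwise the arguments coincide.
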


\begin{proof}
Due to structural properties of a tree it is known that the number of non-leaf vertices in a tree with degree greater than $2$ is upper bounded by the number of leaves in the tree. Further due to Reduction Rule~\ref{red:twoinseries} between every closest pair of vertices with degree greater than $2$, in a tree, there can exist at most one vertex with degree $2$. Hence $|V_4|\leq 3(|V_1|+|V_2|+|V_3|)$. From Reduction Rules~\ref{red:bounding-V1} and \ref{red:bounding-V2}, it follows $|V_1|+|V_2|+|V_3|\leq 6k^2+32k^2-16k +2k$ i.e. $38k^2-14k$. Thus $|V_4|\leq 3(38k^2-14k)=114k^2-42k$.
\end{proof}

\subsection{Final Kernel}

\begin{theorem}
\label{thm:tp-kernel}
\tp admits a kernel of size $\Oh(k^2)$ in general graphs. 
\end{theorem}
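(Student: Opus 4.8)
The plan is to assemble the reduction rules and counting lemmas established above into a single polynomial-time kernelization algorithm, and then account for every vertex and edge of the reduced instance. First I would run the preprocessing phase on $(G,s,t,k)$: apply Reduction Rules~\ref{red:stpath}, \ref{red:no-deg-one}, \ref{red:twoinseries}, \ref{red:just-1-path} and~\ref{red:deg-2-triangle} exhaustively, in the stated order. Each application deletes a vertex or an edge (and possibly decreases $k$), so this phase halts after $\Oh(n+m)$ applications and runs in polynomial time. Then compute a $2$-approximate FVS $S$ with the algorithm of~\cite{twofvs}; if $|S|>2k$, output a trivial NO instance, which is correct by Corollary~\ref{corollary:kfvs}. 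Otherwise $|S|\le 2k$ and $\mathcal{F}=G\setminus S$ is a forest, and I would apply Reduction Rules~\ref{red:bounding-V1} and~\ref{red:bounding-V2}; if either fires we output a trivial NO instance, whose correctness follows from Lemmas~\ref{lemma:bounding-V1} and~\ref{lemma:bounding-V2}, and all of these steps are polynomial by Lemmas~\ref{lemma:red-rule-V1} and~\ref{lemma:red-rule-V2}. Write $(G',s',t',k')$ for the surviving instance. Since $k$ is only ever decreased, $k'\le k$, and since every rule is safe, $(G',s',t',k')$ is a YES instance if and only if $(G,s,t,k)$ is.

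Next I would bound $|V(G')|$. Every vertex of $\mathcal{F}$ lies in at least one of the classes $V_1,V_2,V_3,V_4$: a vertex with no neighbour in $S$ lies in $V_4$, and a vertex with some neighbour $f\in S$ lies in $V_1$ if a vertex of its own tree is also adjacent to $f$, in $V_2$ if a vertex of another tree is, and in $V_3$ otherwise. Hence $|V(\mathcal{F})|\le|V_1|+|V_2|+|V_3|+|V_4|$, the over-counting of vertices that fall in several classes being harmless. Now $|V_3|\le|S|\le 2k$, since each $f\in S$ has at most one neighbour in $V_3$; after Reduction Rules~\ref{red:bounding-V1} and~\ref{red:bounding-V2} have fired on exactly those instances that violate the bounds of Lemmas~\ref{lemma:bounding-V1} and~\ref{lemma:bounding-V2}, we have $|V_1|\le 6k^2$ and $|V_2|\le 20k^2-7k$; and $|V_4|\le 78k^2-15k$ by Lemma~\ref{lemma:bounding-V4}. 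Therefore $|V(\mathcal{F})|=\Oh(k^2)$, and $|V(G')|=|V(\mathcal{F})|+|S|=\Oh(k^2)$.

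Finally I would bound $|E(G')|$ by splitting its edges into those inside $\mathcal{F}$, those between $\mathcal{F}$ and $S$, and those inside $S$. As $\mathcal{F}$ is a forest it has at most $|V(\mathcal{F})|-1=\Oh(k^2)$ edges. Every edge between $\mathcal{F}$ and $S$ is incident to a vertex of $V_1\cup V_2\cup V_3$, because vertices of $V_4$ have no neighbour in $S$; hence the number of such edges is at most $|E_1|+|E_2|+|E_3|$, where $|E_1|\le 6k^2$ and $|E_2|\le 20k^2-7k$ by Lemmas~\ref{lemma:bounding-V1} and~\ref{lemma:bounding-V2}, and $|E_3|\le|S|\le 2k$ because a vertex of $S$ with a neighbour in $V_3$ has exactly that single neighbour in $\mathcal{F}$. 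The edges inside $S$ number at most $\binom{|S|}{2}\le\binom{2k}{2}=\Oh(k^2)$. Summing, $|E(G')|=\Oh(k^2)$, so $(G',s',t',k')$ is the desired quadratic kernel.

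I expect the part needing the most care to be the bookkeeping rather than any new idea: verifying that the four classes $V_1,V_2,V_3,V_4$ really are exhaustive, that the edge count omits nothing (in particular that $V_4$-vertices contribute only forest edges, and that $\binom{2k}{2}$ genuinely covers all edges inside $S$), and that the entire chain of reduction rules terminates in polynomial time; the quantitative heavy lifting has already been carried out in Sections~\ref{sec:structures} and~\ref{sec:quadratic}.
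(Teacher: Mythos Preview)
Your proposal is correct and follows essentially the same approach as the paper's own proof: summing the bounds $|V_1|\le 6k^2$, $|V_2|\le 20k^2-7k$, $|V_3|\le 2k$, $|V_4|\le 78k^2-15k$ for the vertices of $\mathcal{F}$, adding $|S|\le 2k$, and then splitting the edge count into forest edges, $\mathcal{F}$--$S$ edges via $|E_1|+|E_2|+|E_3|$, and edges inside $S$ via $\binom{2k}{2}$. You are, if anything, slightly more careful than the paper about justifying exhaustiveness of the four classes and the $|E_3|\le 2k$ bound.
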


\begin{proof}
The total number of vertices in $\mathcal{F}$ is less than or equal to $|V_1|+|V_2|+|V_3|+|V_4|$. Due to Reduction Rules~\ref{red:bounding-V1}, \ref{red:bounding-V2}, and Lemma~\ref{lemma:bounding-V4}, $|\mathcal{F}|\leq 6k^2+20k^2-7k+2k+114k^2-42k$. Hence $\mathcal{F}\leq 140k^2-47k$. Thus $|V(G)|\leq 140k^2-45k$ after including the vertices from $S$. From Lemmas~\ref{lemma:bounding-V1} and \ref{lemma:bounding-V2} it is known that the total number of edges between $\mathcal{F}$ and $S$ is at most $38k^2-16k$. Since $\mathcal{F}$ is a forest, the total number of edges in $\mathcal{F}$ will be upper bounded by $|\mathcal{F}|$, i.e. $140k^2-47k$. The maximum number of edges whose both end points lie in $S$ is $2k\choose 2$. Hence the total number of edges in $G$ is at most $38k^2-16k+140k^2-47k+2k^2-2k$, i.e. $180k^2-65k$. Since both the number of vertices and edges in $G$ is upper bounded by $\Oh(k^2)$, it holds that \tp admits a kernel of size $\Oh(k^2)$.
\end{proof}

\section{Linear Kernel for Planar Graphs}
\label{sec:planar}

In this section we show that an instance $(G,k)$ of \tp, where $G$ is a planar graph, can be reduced to an equivalent instance $(G',k')$ such that if $(G,k)$ is an YES instance then $|V(G')|=\Oh(k)$, $|E(G')|=\Oh(k)$ and $k'\leq k$. A linear kernel for planar graphs can be derived from an observation given by Eppstein et al.~\cite{ep-planar} as explained in the following theorem.

We start by applying Reduction Rules~\ref{red:stpath}, \ref{red:no-deg-one}, \ref{red:twoinseries}, \ref{red:deg-2-triangle}, and \ref{red:disjoint-m-paths}. Next we give a reduction rule that bounds the number of faces/regions in a reduced graph.

\begin{Reduction Rule}
\label{red:bound-faces}
In a reduced planar graph $G$, if the number of faces $|F|> 2k+1$, then $G$ does not contain a tracking set of size $k$.
\end{Reduction Rule}

\begin{lemma}
\label{lemma:red-bound-faces}
Reduction Rule~\ref{red:bound-faces} is safe and can be applied in polynomial time.
\end{lemma}

\begin{proof}
Let $G=(V,E)$ be a reduced planar graph with $F$ as the set of faces in $G$.
It is known from~\cite{ep-planar} that the number of faces $|F|\leq 2. OPT + 1$, where $OPT$ is the number of trackers in an optimum tracking set. If $OPT\leq k$, then $|F|\leq 2k+1$. Hence, if $|F|>2k+1$, then it is a NO instance. This proves the safeness of the reduction rule. We can calculate the value of $|F|$ using the Euler's formula: $|F|=|E|-|V|+2$, and thus the rule is applicable in polynomial time.
\end{proof}

\begin{theorem}
\label{thm:planar-kernel}
\tp admits a kernel of size $\Oh(k)$ in planar graphs. 
\end{theorem}

\begin{proof}
Let $G=(V,E)$ be a reduced planar graph with $F$ as the set of faces/regions in $G$. 
Let $V_{\geq 3}$ be the set of vertices with degree greater than or equal to $3$ and $V_2$ be the set of vertices with degree equal to $2$. 
Observe that after the application of Reduction Rules~\ref{red:stpath} and \ref{red:no-deg-one} there does not exist a vertex of degree one in the graph. Further, due to Reduction Rules~\ref{red:twoinseries} and  \ref{red:disjoint-m-paths} each vertex of degree two has vertices with degree three of more as its neighbors.
First we construct a graph $G'=(V',E')$ by short-circuiting (the vertex is deleted and an edge is introduced between its neighbors) all vertices in $V_2$. Due to Reduction Rules~\ref{red:deg-2-triangle},\ref{red:disjoint-m-paths} the short-circuiting does not create parallel edges. Note that the number of faces $|F|$ in $G$ is same as that in $G'$. Further, $|V'|=|V_{\geq 3}|$, $|V_2|\leq |E'|$ and $|E|\leq 2|E'|$.
We now try to bound the size of $V_{\geq 3}$ in $G'$.
Since summation of degrees of vertices in a graph is twice the number of edges, and degree of all vertices in $G'$ is at least three,
\begin{align}
\label{eq:1}
|E|\geq 3|V'|/2
\end{align}
Due to Euler's theorem,
\begin{align*}
|F| &=|E'|-|V'|+2 \\
&\geq |V'|/2 + 2 \textrm{  \;\;\;\; (Due to Equation~\ref{eq:1}) } \\
\textrm{Hence, }|V_{\geq 3}|=|V'| &\leq 2(|F|-2) \\
&\leq 2(2k+1 - 2) \textrm{  \;\;\;\; (Due to Reduction Rule~\ref{red:bound-faces}) } \\
&\leq 4k - 2
\end{align*}

Thus $|V_2|=|E'|=|F|+|V'|-2 \leq 2k+1 + 4k-2-2=6k-3$. The total number of vertices in $G$ is $|V|=|V_2|+|V_{\geq 3}|\leq 6k-3+4k-2 = 10k-5$. Since $|E|\leq 2|E'|$,  $|E|\leq 12k-6$. Hence, we have a kernel with $\mathcal{O}(k)$ vertices and $\mathcal{O}(k)$ edges.
\end{proof} 

\section{Hardness Result}
\label{sec:hardness}

Here we show that finding a tracking set of size at most $n-k$ for a graph $G$ with $n$ vertices is {\sc W[1]}-hard. 

\begin{theorem}
\label{thm:tp-hard}
For general graphs the problem of finding a tracking set of size at most $n-k$ in a graph of $n$ vertices, is {\sc W[1]}-hard.
\end{theorem}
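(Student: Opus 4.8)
The statement to prove is that deciding whether a graph $G$ on $n$ vertices has a tracking set of size at most $n-k$ is \W[1]-hard when parameterized by $k$. The natural strategy is a parameterized reduction from a canonical \W[1]-hard problem — I would try \textsc{Independent Set} (or equivalently \textsc{Clique}) parameterized by solution size. The complementary phrasing $n-k$ strongly suggests that we want to identify a set of $k$ vertices that can be \emph{safely left out} of a tracking set; i.e., we want $V(G)\setminus T$ of size $k$, where $T$ is still a tracking set. So the reduction should build, from an instance $(H,k)$ of \textsc{Independent Set}, an \stgraph $G$ such that $G$ has a tracking set of size $n-k$ iff $H$ has an independent set of size $k$.

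\textbf{Construction sketch.} The key is Lemma~\ref{lemma:disjoint-simple} / Lemma~\ref{lemma:disjoint}: among $m$ vertex-disjoint paths between a local source and local destination, a tracking set must contain an internal vertex of at least $m-1$ of them. So I would, for each vertex $v\in V(H)$, create a short internal ``gadget vertex'' $x_v$ lying on its own parallel $s$–$t$ route, so that collectively these give many vertex-disjoint \stpaths; then Lemma~\ref{lemma:disjoint-simple} forces all but one of the $x_v$'s into every tracking set. That gives a baseline of forced trackers. The edges of $H$ should then be encoded so that, whenever $v$ and $w$ are \emph{adjacent} in $H$, there is a pair of \stpaths that are distinguished only by $\{x_v, x_w\}$ — more precisely, so that leaving \emph{both} $x_v$ and $x_w$ out of $T$ creates two \stpaths with the same tracker sequence. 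Then a set $\{x_v : v\in U\}$ of gadget vertices can all be simultaneously omitted from a tracking set precisely when $U$ is an independent set in $H$; one also has to argue that omitting any non-gadget vertex never helps (or costs the same), so that the optimal omission set is always a set of gadget vertices. Combined, $G$ has a tracking set of size $n - k$ iff $H$ has an independent set of size $k$, and since the reduction adds only $O(|V(H)| + |E(H)|)$ vertices and the parameter is preserved, this is a valid parameterized reduction.

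\textbf{What I expect to be routine vs. hard.} The forward direction (independent set $\Rightarrow$ small tracking set) is routine: take $T = V(G)\setminus\{x_v : v\in U\}$ and check, using the edge gadgets, that no two \stpaths collide — collisions would require a non-edge-gadget coincidence, handled by construction, or an edge $vw$ with $v,w\in U$, excluded by independence. The \textbf{main obstacle} is the converse and the tightness of the counting: I need that \emph{every} tracking set of size $n-k$ must omit exactly a $k$-subset of gadget vertices forming an independent set, which means ruling out ``mixed'' omission sets that drop some structural vertices instead. This is where the gadget design has to be done carefully — e.g., by making every structural (non-$x_v$) vertex either forced into $T$ by a tree-sink structure (Corollary~\ref{corollary:tree-generalized}) or by a disjoint-paths argument, and by ensuring each $x_v$ participates in exactly one ``conflict'' per neighbor in $H$ so the accounting is exact. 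The delicate part is balancing these constraints so that the forced-tracker count is exactly $n-k$ when and only when a size-$k$ independent set exists; I would iterate on the gadget (possibly adding degree-two padding vertices, which the reduction rules of Section~\ref{sec:structures} interact with cleanly) until the count is exact.
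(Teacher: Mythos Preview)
Your plan is a plausible route, but it is substantially more laborious than the paper's proof and, as you yourself note, the gadget construction is not actually carried out --- you leave the converse direction and the exact counting to ``iteration,'' which is precisely where such reductions live or die. So in its current form the proposal is a sketch, not a proof.

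The paper sidesteps all of this gadgetry. It does not build a new reduction at all; instead it recycles the existing \NP-hardness reduction from \textsc{Vertex Cover} to \tp given in~\cite{banik2019}. That reduction already has the property that from a graph $H$ on $n$ vertices one obtains an \stgraph $G'$ on $n' = n + |E(H)| + 5$ vertices such that $H$ has a vertex cover of size $\ell$ iff $G'$ has a tracking set of size $\ell + |E(H)| + 2$. Complementing vertex cover to independent set and rewriting the tracking-set size as $n' - (\text{something})$ gives: $H$ has an independent set of size $k+3$ iff $G'$ has a tracking set of size $n'-k$. Since \textsc{Independent Set} is \W[1]-hard and the parameter transformation $k \mapsto k+3$ is trivially computable and bounded in $k$, the result follows in two lines of arithmetic.

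The contrast is instructive: your approach would yield a self-contained reduction (and perhaps a cleaner gadget picture), but at the cost of having to verify a new construction end to end, including the delicate ``no mixed omission set helps'' argument you flag. The paper's approach buys the entire correctness argument for free from~\cite{banik2019} and reduces the theorem to parameter bookkeeping. Before designing fresh gadgets for a dual-parameter hardness result, it is worth checking whether an existing reduction already has the right additive structure to be complemented.
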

\begin{proof}
\tp has been shown \NP-hard by a reduction from \vc in~\cite{tr-j}. Specifically it has been shown that given a graph $G=(V,E)$ on $n$ vertices one can construct in polynomial time a graph $G'(V',E')$, where $|V'|=n'=|V|+|E|+5$, such that $G$ has a vertex cover of size $k$ if and only if $G'$ has a tracking set of size $k+|E|+2$. 
It follows that $G$ has an independent set of size $k$ if and only if $G'$ has a tracking set of size $n-k+|E|+2$, i.e. $n'-k-3$.
Hence $G$ has an independent set of size $k+3$ if and only if $G'$ has a tracking set of size $n'-k$.
Since {\sc Independent Set} is {\sc W[1]}-hard~\cite{DF99}, it follows that the problem of finding a tracking set of size at most $n-k$ is {\sc W[1]}-hard as well.
\end{proof}

\section{Conclusions}
\label{sec:concl}

In this paper we give improved kernels for the \tp problems. This is achieved via exploiting the connection between a feedback vertex set and tracking set, structural properties of tress, and some counting arguments. 
Key open problems are finding an $\Oh(c^k)$ \FPT algorithm and kernel lower bounds for the problem. Other directions to explore are approximation algorithms and studying the problem for other graph classes like directed graphs and weighted graphs. 

%
%
%
\bibliographystyle{splncs04}
\bibliography{tracking}
%
%
%
%
%


\end{document}